\newtheorem{theorem}{Theorem}
\newtheorem{corollary}[theorem]{Corollary}
\newtheorem{definition}[theorem]{Definition}
\newtheorem{lemma}[theorem]{Lemma}
\newtheorem{proposition}[theorem]{Proposition}
\newtheorem{remark}[theorem]{Remark}
\title {\textsc{Binary Representation in Multicomplex and Clifford Algebras}}
\author{Derek Courchesne$^1$ $\quad$ and $\quad$ Sébastien Tremblay$^2$}
\date{May 1, 2025}
\begin{document}


\maketitle

\begin{abstract}
Using a binary representation for basis elements of an algebra combined with a framework of multiplier and index functions, a connection has been established between the structure of a large class of algebras and the XOR componentwise operation. This result covers both multicomplex and Clifford algebras of any dimension. Necessary conditions for the existence of a diagonal basis are derived directly from this result, with an algorithm-like method to obtain it through a change-of-coordinates matrix in the commutative case. The same framework is then used to define an algebra of conjugate operations and study its properties.
\end{abstract}

\noindent {\bf Keywords.} Structure constants, Binary representation, Clifford algebras, Multicomplex numbers, Hadamard matrices, Conjugate algebra

\section{Introduction}

The structure constants of an algebra $\mathcal{A}$ over a field $\mathbb{K}$ are numbers $c_{ijk} \in \mathbb{K}$ used to express how the elements of a basis behave when multiplied together. If $\mathcal{B}:=\{\mathrm{e}_i\} \subset \mathcal{A}$ is such a basis, then the expansion
$$ \mathrm{e}_i*\mathrm{e}_j = \sum_k c_{ijk} \mathrm{e}_k $$
sets the value of the coefficients $c_{ijk}$ for $\mathcal{B}$. These coefficients have been studied for their applications in physics, mostly in the case of Lie algebras \cite{georgi,HumphreysLieAlg}, but variations of this approach can appear in other contexts when trying to express the result of linear operations on basis elements. One notable example is the connections forms $\omega_{ij}$ in differential geometry \cite{oneill} defined, for a fixed tangent vector $v \in T_{\mathbf{p}}(\mathbb{R}^3)$, as the coefficients expressing the covariant derivative on frame elements $\{E_i\}$ in the frame itself:
$$ \nabla_vE_i = \sum_jw_{ij}E_j(\mathbf{p}). $$
Structure constants are at the core of any given algebra $\mathcal{A}$, since isomorphism classes are determined by the structure constants up to a change of basis.

The aim of this article is to start from a similar approach in a simpler case, when the product of basis elements is not an expansion, but directly an element from the same basis (within a factor in $\mathbb{K}$) i.e.
$$ \mathrm{e}_i*\mathrm{e}_j = c_{ij} \mathrm{e}_{r(i,j)} $$
where $r(i,j)$ is an index that depends on both $i$ and $j$. By specializing to this case, we are able to see more deeply how the indices (when enumerating the basis elements $\mathrm{e}_0, \mathrm{e}_1,\ldots$ in a given way) are linked to the product operation. This analysis is performed using both a framework of multiplier and index functions (analog to the structure constants) \cite{multiZ} and a binary representation for the basis elements. From these we were able to obtain general results in a wide variety of algebras, such as Clifford algebras \cite{baylis, Lounesto_2001}, multicomplex numbers \cite{GBPrice, BCHyper, StruppaVajiac2014} and exterior algebras \cite{oneill, bishopGoldberg}. 

The last two sections focus primarily on multiperplex and multicomplex spaces that are the commutative analogs of Clifford algebras \cite{SharmaCommutativeAnalogs} and the only hypercomplex number systems with a diagonal basis. An algorithm-like method to obtain a diagonal basis is derived, and we define the conjugate algebra which is studied in more detail.

\sloppy\section{The multiplier and index functions}
\label{section : multiplier and index vector spaces}

Let $V$ be a finite vector space of dimension $n$ over a field $\mathbb{K}$. We assume that this vector space is equipped with a binary operation $*$ having the following properties:
\begin{enumerate}
    \item $V$ is closed under $*$;
    \item the operation $*$ is $\mathbb{K}$-bilinear on $V$ (making it an algebra over a field);
    \item there is a basis $\mathcal{B}=\{\mathrm e_k \}_{k=0}^{n-1}$ on $V$ such that
\end{enumerate}
$$ 
\mathrm e_p * \mathrm e_q \in \bigcup_{k=0}^{n-1} \mathbb{K} \mathrm e_k, \qquad \forall p,q \in \{ 0,\ldots,n-1 \}, 
$$
where $\mathbb{K} \mathrm e_k$ is defined as the axis $\mathbb{K} \mathrm e_k := \{ x \mathrm e_k \mid x \in \mathbb{K} \}.$
From the bilinearity property, the operation $*$ is fully characterized by its action on the pairs of the basis elements $\mathcal B$ \cite{Kantor}. Let us consider two elements $u,v \in V$ expressed in terms of the basis $\mathcal B$ with the components $x_0, \ldots, x_{n-1}$ and $y_0, \ldots, y_{n-1}$, respectively, then
$$ u * v = \left( \sum_{p=0}^{n-1} x_p \mathrm e_p \right) * \left( \sum_{q=0}^{n-1} y_q \mathrm e_q \right) = \sum_{p,q = 0}^{n-1} x_p y_q (\mathrm e_p * \mathrm e_q). $$
Since the space $V$ is of dimension $n$, we can find a multiplication table for the binary operation $*$, equivalent to an $n \times n$ matrix 
$A_{pq} = \mathrm e_p * \mathrm e_q$. 
The third property allows us to go further in this reasoning. For fixed values of $p$ and $q$, the result of $\mathrm e_p * \mathrm e_q$ is along a given single axis $\mathbb{K} \mathrm e_r$, i.e. $\exists s \in \mathbb{K}$ and $r \in \{ 0,\ldots,n-1 \}$ such that $A_{pq} = \mathrm e_p * \mathrm e_q = s \mathrm e_r. $
The coefficients $s$ and $r$ are called the \textit{multiplier} and the \textit{coefficient index}, respectively (see \cite{multiZ}). For arbitrary values of $p$ and $q$, we can define the functions $s : \{ 0,\ldots,n-1 \}^2 \rightarrow \mathbb{K}$ and $r : \{ 0,\ldots,n-1 \}^2 \rightarrow \{ 0,\ldots,n-1 \}$ such that
\begin{equation} 
    \label{multiplier and index introduction} A_{pq} = \mathrm e_p * \mathrm e_q = s(p,q) \mathrm e_{r(p,q)}. 
\end{equation}
We name them the \textit{multiplier function} $s(p,q)$ and the \textit{index function} $r(p,q)$. Note that since the binary operation $*$ is fully characterized by its multiplication table $A_{pq}$, it is also fully characterized by its multiplier $s(p,q)$ and index $r(p,q)$ functions (or matrices). 

In the following, we will illustrate that the third property is not very restrictive for several important algebras.

\begin{remark} 
    The multiplier and index functions depend on the basis $\mathcal B$, hence different tables for $s(p,q)$ and $r(p,q)$ does not directly imply that two spaces are not isomorphic. However, if those tables are the same and there is already a vector space isomorphism, then both structures have the same binary operation $*$ and we have an algebra isomorphism.
\end{remark}

\subsection{Examples of multiplier and index functions}
\label{subsection : some examples}
In this subsection, we give explicit examples of the multiplier and index functions for many algebras. However, before considering these examples, we introduce two binary functions that will be useful for the rest of the work: the base-2 numeral representation function $\mathrm{bin}_k(n)$ and its inverse $\mathrm{val}_k(b_{k-1},b_{k-2},\ldots,b_0).$

\begin{definition}[Base-2 numeral representation function]
Let $n,k\in \mathbb N$ such that $n< 2^k$. We define the function $\mathrm{bin}_k(n)$ with fixed-length binary representation of 
$n$ as a $k$-tuple of bits (with leading zeros added when needed) by
$$
\begin{array}{rcl}
\mathrm{bin}_k: \mathbb N & \rightarrow & \{0,1\}^k \\*[2ex]
n &\mapsto & \mathrm{bin}_k(n)=(b_{k-1},b_{k-2},\ldots,b_0)
\end{array}
$$
where $n=\sum_{i=0}^{k-1} b_i\,2^i$ and $b_i\in \{0,1\}$.
\end{definition}

\begin{definition}[Inverse of the base-2 numeral representation function]
Let $n,k\in \mathbb N$ such that $n< 2^k$. Then
$$
\begin{array}{rcl}
\mathrm{val}_k: \{0,1\}^k & \rightarrow & \mathbb N \\*
(b_{k-1},b_{k-2},\ldots,b_0) &\mapsto & \mathrm{val}_k(b_{k-1},b_{k-2},\ldots,b_0)=\displaystyle\sum_{i=0}^{k-1} b_i\,2^i.
\end{array}
$$
\end{definition}

Let us now consider the following algebras:
1) the complex numbers, 2) the split-complex numbers, 3) the dual numbers, 4) the quaternions (all covered in \cite{Kantor}), 5) the bicomplex numbers \cite{GBPrice} and finally 6) the Clifford algebra $\mathrm{Cl}_{1,1}$ \cite{Lounesto_2001}. 

\begin{enumerate}
    
\item Let's start with a simple case: the complex number $\mathbb{C}$ seen as a vector space $\mathbb{R}^2$ over the field of real numbers with the basis elements $\mathrm e_0 := (1,0)$ and $\mathrm e_1 := (0,1)$. It is well known \cite{Kantor} that the binary operation, i.e. the commutative product, is defined as
$$ (a,b) \cdot (c,d) := (ac - bd,\ ad + bc), \quad \forall a,b,c,d \in \mathbb{R}. $$
This product is $\mathbb{R}$-bilinear, and we see from the multiplication table
$$ \begin{array}{c | r r}
     \cdot & \mathrm e_0 & \mathrm e_1\\
     \hline
     \mathrm e_0 & \mathrm e_0 & \mathrm e_1 \\
     \mathrm e_1 & \mathrm e_1 & -\mathrm e_0 \\
\end{array} $$
that the product of any pair of basis elements $\{\mathrm e_0,\,\mathrm e_1\}$ is always along a single axis $\mathbb{R}  \mathrm e_0$ or $\mathbb{R} \mathrm e_1$. Thus this multiplication table can be separated into a multiplier part $s(p,q)$ and an index $r(p,q)$ part for $p,q \in \{0,\, 1\}$. We find
$$ \begin{array}{c | c c}
     s & 0 & 1\\
     \hline
     0 & + & + \\
     1 & + & - \\
\end{array} \qquad \text{and} \qquad \begin{array}{c | c c}
     r & 0 & 1\\
     \hline
     0 & 0 & 1 \\
     1 & 1 & 0 \\
\end{array} $$
where we use the convention (in this case and for the other examples that will follow) that $\pm 1$ is represented by $\pm$ in the multiplier table. The functions $s(p,q)$ and $r(p,q)$ can be written in a simple closed-form expression:
$$ s(p,q) = (-1)^{p \wedge q} \qquad \text{and}\qquad r(p,q) = p \oplus q, $$
where $\wedge$ and $\oplus$ are, respectively, the binary logical operators AND and XOR. 

\item For the split-complex numbers $\{a + b\mathrm j \mid a,b \in \mathbb{R},\ \mathrm j^2:=1\}$ with basis elements $\{\mathrm e_0 := 1,\,\mathrm e_1 := \mathrm j \}$, we find
$$ \begin{array}{c | c c}
     \cdot & 1 & \mathrm j\\
     \hline
     1 & 1 & \mathrm j \\
     \mathrm j & \mathrm j & 1 \\
\end{array} \qquad \begin{array}{c | c c}
     s & 0 & 1\\
     \hline
     0 & + & + \\
     1 & + & + \\
\end{array} \qquad \begin{array}{c | c c}
     r & 0 & 1\\
     \hline
     0 & 0 & 1 \\
     1 & 1 & 0 \\
\end{array} $$
where $$ s(p,q) = 1 \qquad \text{and}\qquad r(p,q) = p \oplus q. $$

\item For the dual numbers $\{a + b\epsilon \mid a,b \in \mathbb{R},\ \epsilon^2:=0\}$ with basis elements $\{\mathrm e_0 := 1,\, \mathrm e_1 := \epsilon \}$, we obtain
$$ \begin{array}{c | c c}
    \cdot & 1 & \epsilon\\
     \hline
     1 & 1 & \epsilon \\
     \epsilon & \epsilon & 0 \\
\end{array} \qquad \begin{array}{c | c c}
     s & 0 & 1\\
     \hline
     0 & + & + \\
     1 & + & 0 \\
\end{array} \qquad \begin{array}{c | c c}
     r & 0 & 1\\
     \hline
     0 & 0 & 1 \\
     1 & 1 & 0 \\
\end{array} $$
$$ s(p,q) = 1 - p \wedge q, \quad r(p,q) = p \oplus q. $$

\item For the quaternions, a vector space over $\mathbb{R}$ with basis elements $\{ \mathrm e_0 := 1, \mathrm e_1 := \mathrm i, \mathrm e_2 := \mathrm j, \mathrm e_3 := \mathrm k\}$,
$$ \begin{array}{c | r r r r}
     \cdot & 1 & \mathrm i & \mathrm j & \mathrm k \\
     \hline
     1 & 1 & \mathrm i & \mathrm j & \mathrm k \\
     \mathrm i & \mathrm i & -1 & \mathrm k & -\mathrm j \\
     \mathrm j & \mathrm j & -\mathrm k & -1 & \mathrm i \\
     \mathrm k & \mathrm k & \mathrm j & - \mathrm i & -1 \\
\end{array} \qquad \begin{array}{c | c c c c}
     s & 0 & 1 & 2 & 3 \\
     \hline
     0 & + & + & + & + \\
     1 & + & - & + & - \\
     2 & + & - & - & + \\
     3 & + & + & - & - \\
\end{array} 
\qquad
 \begin{array}{c | c c c c}
     r & 0 & 1 & 2 & 3 \\
     \hline
     0 & 0 & 1 & 2 & 3 \\
     1 & 1 & 0 & 3 & 2 \\
     2 & 2 & 3 & 0 & 1 \\
     3 & 3 & 2 & 1 & 0 \\
\end{array} $$
One can recognize the Klein group in the index table $r(p,q)$. However, let us rewrite this table with the indices $p,q\in \{0,1,2,3\}$ written in their base-$2$ numeral representation using the $\mathrm{bin}_2$ function. Explicitly, we obtain:
$$ \begin{array}{c | c c c c}
     \mathrm{bin}_2(r) & (0,0) & (0,1) & (1,0) & (1,1) \\
     \hline
     (0,0) & (0,0) & (0,1) & (1,0) & (1,1) \\
     (0,1) & (0,1) & (0,0) & (1,1) & (1,0) \\
     (1,0) & (1,0) & (1,1) & (0,0) & (0,1) \\
     (1,1) & (1,1) & (1,0) & (0,1) & (0,0) \\
\end{array} $$
We observe the same pattern for the index function $r(p,q)$ as in the previous example of the dual numbers: a XOR operation $\oplus$ applied componentwise (bitwise) on the base-2 numeral representation of a pair $p,q$ of indices. Indeed, for 
\begin{equation}
\label{binpq}
\mathrm{bin}_2(p)=(p_1,p_2) \qquad \text{and}\qquad  \mathrm{bin}_2(q)=(q_1,q_2),
\end{equation}
where $p_i,q_i\in \{0,1\}$, we find
$$ 
\mathrm{bin}_2(r) = (p_1\oplus q_1,\ p_2\oplus q_2) 
$$
such that
\begin{equation}
\label{rquaternions}
    r(p,q)=\mathrm{val}_2(p_1\oplus q_1,\ p_2\oplus q_2).
\end{equation}


The expression for the multiplier function $s(p,q)$ is given by
$$ 
s(p,q) = \begin{cases}
    1, & \text{for } p=0 \ \text{ or }\ q = 0,\\
    \varepsilon_{pql} - \delta_{pq}, & \text{for } p,q,l \in \{ 1,2,3 \}\ \text{and }\ p \neq l \neq q,
\end{cases} 
$$
where $\varepsilon_{pql}$ represents the the Levi-Civita symbol and $\delta_{pq}$ is the Kronecker delta.

\item For the bicomplex numbers \cite{BCQM1, BCQM2} with basis elements $\{ \mathrm e_0 := 1, \mathrm e_1 := \mathrm i_1, \mathrm e_2 := \mathrm i_2, \mathrm e_3 := \mathrm i_1\mathrm i_2 \}$, where $\mathrm i_1^2 = \mathrm i_2^2 = -1$,
$$ \begin{array}{r | r r r r}
      \cdot & 1 & \mathrm i_1 & \mathrm i_2 & \mathrm i_1 \mathrm i_2 \\
     \hline
     1 & 1 & \mathrm i_1 & \mathrm i_2 & \mathrm i_1\mathrm i_2 \\
     \mathrm i_1 & \mathrm i_1 & -1 & \mathrm i_1\mathrm i_2 & -\mathrm i_2 \\
     \mathrm i_2 & \mathrm i_2 & \mathrm i_1\mathrm i_2 & -1 & -\mathrm i_1 \\
     \mathrm i_1\mathrm i_2 & \mathrm i_1\mathrm i_2 & -\mathrm i_2 & - \mathrm i_1 & 1 \\
\end{array} \qquad \begin{array}{l | l l l l}
     s & 0 & 1 & 2 & 3 \\
     \hline
     0 & + & + & + & + \\
     1 & + & - & + & - \\
     2 & + & + & - & - \\
     3 & + & - & - & + \\
\end{array} \qquad \begin{array}{c | c c c c}
     r & 0 & 1 & 2 & 3 \\
     \hline
     0 & 0 & 1 & 2 & 3 \\
     1 & 1 & 0 & 3 & 2 \\
     2 & 2 & 3 & 0 & 1 \\
     3 & 3 & 2 & 1 & 0 \\
\end{array} $$
The index function $r(p,q)$ is the same as in the quaternions, i.e. given by equation~(\ref{rquaternions}). Considering now the multiplier function $s(p,q)$, again here from equation~(\ref{binpq}) we rewrite $p$ and $q$ in their base-2 numeral representation. Hence, we obtain
$$ 
s(p,q) = (-1)^{\sum_i p_i \wedge q_i} = (-1)^{\sum_i (p \wedge q)_i}. 
$$


\item Finally, we study the Clifford algebra $\mathrm{Cl}_{1,1}$. A basis for this algebra can be written as $\{ \mathrm e_0 :=  1, \mathrm e_1 := \mathrm i, \mathrm e_2 := \mathrm j, \mathrm e_3 := {\mathrm i\mathrm j} \}$ with $\mathrm i^2 = -1$, $\mathrm j^2 = 1$ and ${\mathrm i\mathrm j} = - {\mathrm j\mathrm i}$. We obtain the following tables:
$$ \begin{array}{r | r r r r}
   \cdot & 1 & \mathrm i & \mathrm j & {\mathrm i\mathrm j} \\
     \hline
     1 & 1 & \mathrm i & \mathrm j & {\mathrm i\mathrm j} \\
     \mathrm i & \mathrm i & -1 & {\mathrm i\mathrm j} & -\mathrm j \\
     \mathrm j & \mathrm j & -{\mathrm i\mathrm j} & 1 & -\mathrm i \\
     {\mathrm i\mathrm j} & {\mathrm i\mathrm j} & \mathrm j & \mathrm i & 1 \\
\end{array} \qquad \begin{array}{c | c c c c}
     s & 0 & 1 & 2 & 3 \\
     \hline
     0 & + & + & + & + \\
     1 & + & - & + & - \\
     2 & + & - & + & - \\
     3 & + & + & + & + \\
\end{array} \qquad
\begin{array}{c | c c c c}
     r & 0 & 1 & 2 & 3 \\
     \hline
     0 & 0 & 1 & 2 & 3 \\
     1 & 1 & 0 & 3 & 2 \\
     2 & 2 & 3 & 0 & 1 \\
     3 & 3 & 2 & 1 & 0 \\
\end{array} $$
Again, we get the XOR index function (\ref{rquaternions}) and a multiplier of the form $s(p,q) = \pm 1$. This is not a coincidence, as we will see shortly, but rather a universal property of a specific class of algebras.
\end{enumerate}

\sloppy\section{Algebras with the XOR index function}
\label{section : Commutation property and XOR index function}

\begin{definition}[Hypercomplex algebra]
    \label{XOR algebra : hypercomplex algebra def}
    Let's consider a vector space $\mathbb{K}^{n+1}$ with a basis $\{ 1, u_1, \ldots, u_n \}$ and a binary operation $*$ such that
    \begin{enumerate}
        \item $1 * u_i = u_i = u_i * 1$;
        \item $u_i * u_j = \lambda (u_j * u_i) \quad \text{for}\quad i \neq j$;
        \item $1^2 = 1 \quad \text{and} \quad u_i^2 \in \{ -1, 0 ,1 \}$;
        \item $(u_i*u_j)*u_k = u_i*(u_j*u_k)$ for all $i,j$.
    \end{enumerate}
for all $i,j = 1, \ldots, n$ and $\lambda\in \mathbb K$. The resulting algebra will be called a hypercomplex algebra. 
\end{definition}
This construction encompasses Clifford algebras and a large class of commutative algebras including the hypercomplex number spaces. The elements $u_i, i = 1,\ldots, n$ will be called \textit{principal units} and any multiplication of distinct elements from the basis set will be called a \textit{composite unit}. For the remaining of this section, we also use the juxtaposition instead of explicitly writing the binary operation symbol $*$.
\begin{proposition}
    \label{XOR algebra : commutation property lambda values}
    In a hypercomplex algebra, the commutation property 
    $$ u_iu_j = \lambda u_j u_i \quad \text{for}\quad i \neq j \quad \text{and}\quad  \quad i,j = 1, \ldots, n $$
    implies that either $\lambda = +1$ or $\lambda = -1$.
\end{proposition}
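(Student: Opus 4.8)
The plan is to exploit the fact that the $\lambda$ appearing in property~2 of Definition~\ref{XOR algebra : hypercomplex algebra def} is a single scalar which must govern \emph{every} pair of distinct principal units, and in particular a given pair taken in both orders. Fixing any $i \neq j$, the commutation property applied to the pair $(i,j)$ reads $u_iu_j = \lambda\, u_ju_i$, while the same property applied to the pair $(j,i)$ reads $u_ju_i = \lambda\, u_iu_j$. Substituting the second relation into the first gives
\begin{equation*}
u_iu_j = \lambda\,(u_ju_i) = \lambda\,(\lambda\, u_iu_j) = \lambda^2\, u_iu_j,
\end{equation*}
so that $(1-\lambda^2)\,u_iu_j = 0$. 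Hence, as soon as there is a single pair $i \neq j$ with $u_iu_j \neq 0$, the scalar factor must vanish, yielding $\lambda^2 = 1$ and therefore $\lambda = +1$ or $\lambda = -1$.

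The step I expect to be the real obstacle is guaranteeing the existence of such a nonvanishing product, because property~3 permits $u_i^2 = 0$ and one could fear that all mixed products $u_iu_j$ collapse to zero, leaving the swap identity vacuous. This is precisely where associativity (property~4) together with $u_i^2 \in \{-1,0,1\}$ is used. If some principal unit satisfies $u_i^2 = \pm 1$, then for any $j \neq i$ associativity gives
\begin{equation*}
u_i\,(u_iu_j) = (u_iu_i)\,u_j = u_i^2\, u_j = \pm\, u_j \neq 0,
\end{equation*}
which forces $u_iu_j \neq 0$; the argument of the first paragraph then applies verbatim and we conclude $\lambda \in \{+1,-1\}$. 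Note that even when \emph{every} $u_i^2 = 0$ (as in an exterior-type algebra) the conclusion still follows, provided at least one mixed product is nonzero.

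The only genuinely degenerate situation is the one in which $u_i^2 = 0$ for all $i$ \emph{and} all mixed products happen to vanish: there the relation $u_iu_j = \lambda\, u_ju_i$ degenerates to $0 = \lambda\cdot 0$ and imposes no constraint on $\lambda$. This corresponds to the trivial null algebra and should be excluded (or $\lambda$ fixed by convention). In every non-degenerate case the dichotomy $\lambda = \pm 1$ holds, which is exactly the commutative ($\lambda=+1$) versus anticommutative ($\lambda=-1$) alternative already visible in the examples of Section~\ref{subsection : some examples}.
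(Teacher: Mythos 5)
Your core computation is exactly the paper's proof: apply the commutation relation twice to the same pair, obtain $u_iu_j = \lambda^2 u_iu_j$, and conclude $\lambda^2 = 1$, so $\lambda = \pm 1$. Where you go beyond the paper is in flagging that this cancellation requires $u_iu_j \neq 0$ --- a step the paper's one-line proof performs silently --- and your associativity argument $u_i(u_iu_j) = u_i^2 u_j = \pm u_j \neq 0$ is a correct, self-contained way to secure it whenever some $u_i^2 = \pm 1$. Note, however, that in the paper's intended framework the issue never arises: the standard basis of a hypercomplex algebra is taken to consist of $1$, the principal units, and all composite units $u_{kl} := u_k u_l$, etc., so every mixed product $u_iu_j$ is itself a basis element and hence nonzero, even when every $u_i^2 = 0$ (the exterior-algebra case your last paragraph worries about). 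The degenerate ``null algebra'' you propose to exclude is therefore already ruled out by the linear independence of the composite units, and no extra convention is needed; what your associativity detour buys is robustness if one reads only the literal axioms of Definition~\ref{XOR algebra : hypercomplex algebra def}, which, taken strictly, do not by themselves force $u_iu_j \neq 0$.
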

\begin{proof}
    By applying this property twice for $i \neq j$ :
    $$ u_i u_j = \lambda u_j u_i = \lambda^2 u_i u_j \ \Rightarrow \ \lambda^2 = 1 \ \Rightarrow \ \lambda = \pm 1. $$
\end{proof}
\begin{remark}
    When $\lambda = -1$ and $u_i^2 \in \{ -1,+1 \}$, we obtain a Clifford algebra $\mathrm{Cl}_{a,b}$ where $a,b \in \mathbb{N}$ denotes the number of principal units whose square gives $+1$ and $-1$, respectively, with the anticommutation relations $\{ u_i, u_j \} = 0$ for $i \neq j$. However, when $\lambda = +1$ we obtain algebras with the commutation relations $[u_i, u_j] = 0$ for the principal units. The multicomplex number spaces $\mathbb M_n$, see \cite{courchesneAndTremblay2025, GBPrice}, are specific examples of such algebras with $u_i^2 = -1, \forall i = 1,\ldots,n$, since the principal units commute.
\end{remark}
The basis of a hypercomplex algebra consists of all the principal and composite units, that is, all possible combinations, by juxtaposition, of the units $\{ u_i \}$ (including $1$). It can be written simply in terms of the power set $\mathcal{P}(\{ 1, \ldots, n \})$ and is typically called the \textit{standard basis} of the algebra:
$$ \{ u_\mathcal{A} \mid \mathcal{A} \in \mathcal{P}_n \}, \quad \mathcal{P}_n := \mathcal{P}(\{ 1, \ldots, n \}), $$
where the empty set is associated with the index zero and $u_0 := 1$, the singleton $\{k\} \in \mathcal{P}_n$ is associated with the index $k$ and the set $\{ k,l \} \in \mathcal{P}_n$ is associated with the indices $kl$ such that $u_{kl} := u_k u_l$, etc. Another unique representation that will help in the following theorem is by enumerating the basis elements
$$ \mathrm e_0, \mathrm e_1, \ldots, \mathrm e_{2^n-1} $$
using the base-$2$ representation of integers. In each element of the basis, either the factor $u_j$ is present or not, and each can be associated to the $(j-1)$th position (starting from the least significant bit at index $0$) of an integer $0 \leq p < 2^n$. Explicitly,
$$ p = \sum_{k=0}^{n-1} p_k 2^k, \quad p_k \in \{0,1\} $$
and we define
\begin{equation}
    \label{XOR algebra : binary representation}
    \mathrm e_p := \prod_{k=0}^{n-1} u_{k+1}^{p_k}, \quad u_j^0 = 1, \quad j = 1,\ldots,n.
\end{equation}
Clearly this defines the $2^n$ elements of the basis, and this representation is unique (two different integers are associated to different elements of the basis).
\begin{theorem}
    \label{XOR algebra : XOR thm}
    The index function for the standard basis of a hypercomplex algebra is the componentwise XOR function $r(p,q) = p \oplus q$.
\end{theorem}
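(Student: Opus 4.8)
The plan is to compute $\mathrm e_p * \mathrm e_q$ directly from the defining product~\eqref{XOR algebra : binary representation}, tracking only which principal units survive, since the index function is by construction insensitive to scalar factors. Writing $\mathrm{bin}_n(p) = (p_{n-1},\ldots,p_0)$ and $\mathrm{bin}_n(q) = (q_{n-1},\ldots,q_0)$, the associativity axiom (property~4) lets me drop all parentheses and regard
$$ \mathrm e_p * \mathrm e_q = \Big( \prod_{k=0}^{n-1} u_{k+1}^{p_k} \Big)\Big( \prod_{k=0}^{n-1} u_{k+1}^{q_k} \Big) $$
as a single unbracketed juxtaposition of principal units.

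Next I would reorder this juxtaposition so that all occurrences of a given unit $u_{k+1}$ are brought together. By Proposition~\ref{XOR algebra : commutation property lambda values} the commutation constant satisfies $\lambda = \pm 1$, so each transposition of two distinct units multiplies the expression by $\lambda$; hence sorting the factors by their index produces only an overall sign $\lambda^{N} \in \{\pm 1\}$, where $N$ counts the inversions. This yields
$$ \mathrm e_p * \mathrm e_q = \lambda^{N}\, u_1^{p_0+q_0}\, u_2^{p_1+q_1} \cdots u_n^{p_{n-1}+q_{n-1}}, $$
and I emphasize that for the index claim the precise value of this sign is irrelevant.

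The key step is then to read off the exponents. For each $k$ the exponent is $p_k+q_k \in \{0,1,2\}$. When $p_k+q_k=2$, i.e.\ $p_k=q_k=1$, property~3 replaces $u_{k+1}^2 \in \{-1,0,1\}$ by a central scalar, which I factor out; when this scalar is $0$ the whole product vanishes, still consistent with the claim since $0 = 0\cdot\mathrm e_{p\oplus q}$. When $p_k+q_k=0$ the factor is $1$, and when $p_k+q_k=1$ the single surviving factor is $u_{k+1}$. In every case a unit $u_{k+1}$ survives in the reduced product precisely when exactly one of $p_k,q_k$ equals $1$, that is, when $p_k \oplus q_k = 1$. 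Collecting the survivors, $\mathrm e_p * \mathrm e_q$ is a scalar multiple of $\prod_{k=0}^{n-1} u_{k+1}^{p_k \oplus q_k}$, which by~\eqref{XOR algebra : binary representation} is exactly $\mathrm e_{\mathrm{val}_n(p_{n-1}\oplus q_{n-1},\ldots,p_0\oplus q_0)} = \mathrm e_{p\oplus q}$. Hence the product lies along the axis $\mathbb K\,\mathrm e_{p\oplus q}$ and $r(p,q)=p\oplus q$.

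The one point requiring care — and the main obstacle — is a clean justification of the reordering: that any juxtaposition of principal units can be sorted by index using only the $\lambda=\pm 1$ relations, and that once two equal units become adjacent the resulting square is central (by property~3) and may be pulled out without disturbing the remaining factors. I expect this is most cleanly handled by an induction on the number of out-of-order pairs, using that each $u_{k+1}^2$ is a scalar and therefore commutes with everything; everything else is bookkeeping of parities.
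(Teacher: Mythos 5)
Your proposal is correct and follows essentially the same route as the paper: the paper likewise regroups the juxtaposed units via the commutation relation (accumulating a factor $\lambda^{\alpha(p,q)}$ analogous to your $\lambda^{N}$), performs the same four-case analysis on $(p_k,q_k)$ with $u_{k+1}^{p_k}u_{k+1}^{q_k}$ reducing to $1$, $u_{k+1}$, or a scalar $\beta(k)\in\{-1,0,1\}$, and absorbs all scalars into the multiplier so that $\mathrm e_p * \mathrm e_q = s(p,q)\,\mathrm e_{p\oplus q}$, with the zero case handled exactly as you do. The reordering step you flag as the main obstacle is taken for granted in the paper's proof; your sketched induction on out-of-order pairs (using bilinearity to pull central scalars out) is a sound way to make it explicit but adds no new idea beyond the paper's argument.
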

\begin{proof}
    We have seen that the $2^n$ elements of the basis can be written using (\ref{XOR algebra : binary representation}), so by multiplying two elements $\mathrm e_p$ and $\mathrm e_q$ together we get:
    $$ \begin{aligned}
        \mathrm e_p \mathrm e_q &= \prod_{k=0}^{n-1} u_{k+1}^{p_k} \prod_{l=0}^{n-1} u_{l+1}^{q_l}\\
        &= u_1^{p_0} u_{2}^{p_1} \ldots u_n^{p_{n-1}} u_1^{q_0} u_{2}^{q_1} \ldots u_n^{q_{n-1}}\\
        &= \lambda^{\alpha(p,q)} (u_1^{p_0}u_1^{q_0})(u_2^{p_1}u_2^{q_1}) \ldots (u_n^{p_{n-1}}u_n^{q_{n-1}})\\
        &= \lambda^{\alpha(p,q)} \prod_{k=0}^{n-1} u_{k+1}^{p_k}u_{k+1}^{q_k}
    \end{aligned} $$
    where $\alpha(p, q)$ denotes how many times the commutation property has been applied. For each term of the product, there are four possible cases depending on the values of $p_k, q_k~\in~\{0,1\}$:
    \begin{equation}
        \label{XOR algebra : XOR thm table}
        \begin{array}{c | c | c }
            p_k & q_k & u_{k+1}^{p_k}u_{k+1}^{q_k}\\
            \hline
            0 & 0 & 1\\
            0 & 1 & u_{k+1}\\
            1 & 0 & u_{k+1}\\
            1 & 1 & \beta(k) \in \{ -1, 0 ,1 \}
        \end{array}
    \end{equation}
    So the result depends on the XOR function $p_k \oplus q_k$,
    \begin{equation}
        \label{XOR algebra : XOR thm last equation}
        \mathrm e_p \mathrm e_q = c(p,q)\lambda^{\alpha(p,q)} \prod_{k=0}^{n-1} u_{k+1}^{p_k \oplus q_k} = c(p,q)\lambda^{\alpha(p,q)} \mathrm e_{p \oplus q}
    \end{equation}
    where any extra factor $\beta(k)$ has been absorbed in $c(p,q) \in \{ -1, 0 , +1 \}$. By setting $s(p,q) := c(p,q)\lambda^{\alpha(p,q)}$ , the proof is complete.
\end{proof}
\begin{proposition}
    \label{XOR algebra : values of s}
    For any hypercomplex algebra, the multiplier function for the standard basis has the following properties:
    \begin{enumerate}
        \item $s(p,q) \in \{ -1,0,+1 \}, \forall p,q$;
        \item if $u_i^2 \in \{ -1,+1 \}, \forall i$, then $s(p,q) \in \{ -1, +1 \}, \forall p,q$;
        \item if $u_i^2 =1, \forall i$ and $\lambda=1$, then $s(p,q) =1, \forall p,q$.
    \end{enumerate}
\end{proposition}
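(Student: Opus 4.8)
The plan is to lean on the factorization $s(p,q) = c(p,q)\lambda^{\alpha(p,q)}$ already established in the proof of Theorem~\ref{XOR algebra : XOR thm}, and to bound the two factors $c(p,q)$ and $\lambda^{\alpha(p,q)}$ separately. By Proposition~\ref{XOR algebra : commutation property lambda values} we have $\lambda = \pm 1$, so $\lambda^{\alpha(p,q)} \in \{-1,+1\}$ no matter what the exponent $\alpha(p,q)$ happens to be. The whole argument then reduces to understanding the factor $c(p,q)$.

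First I would make the content of $c(p,q)$ explicit by revisiting the case table~(\ref{XOR algebra : XOR thm table}). The only paired factors that are neither $1$ nor a single unit $u_{k+1}$ occur when $p_k = q_k = 1$, in which case the factor is $\beta(k) = u_{k+1}^2$. Collecting these contributions gives
$$ c(p,q) = \prod_{k\,:\,p_k = q_k = 1} u_{k+1}^2 = \prod_{k=0}^{n-1} \left( u_{k+1}^2 \right)^{(p \wedge q)_k}, $$
a product of squares of principal units taken exactly over the bit positions where the bitwise AND $p \wedge q$ equals $1$.

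With this description in hand, each part follows directly from property~3 of Definition~\ref{XOR algebra : hypercomplex algebra def}, namely $u_i^2 \in \{-1,0,+1\}$. For part~1, $c(p,q)$ is a product of elements of $\{-1,0,+1\}$ and hence lies in $\{-1,0,+1\}$; multiplying by $\lambda^{\alpha(p,q)} = \pm 1$ keeps $s(p,q)$ in $\{-1,0,+1\}$. For part~2, the hypothesis $u_i^2 \in \{-1,+1\}$ removes $0$ from the admissible factors, so $c(p,q)$ is a product of $\pm 1$ and lies in $\{-1,+1\}$, and the same holds for $s(p,q)$. For part~3, the stronger hypotheses $u_i^2 = 1$ and $\lambda = 1$ force every factor of $c(p,q)$ to be $1$ and $\lambda^{\alpha(p,q)} = 1$, so $s(p,q) = 1$.

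This is essentially a bookkeeping argument rather than a genuine difficulty, and no step presents a real obstacle once the factorization from Theorem~\ref{XOR algebra : XOR thm} is available. The only point deserving a little care is identifying precisely which indices contribute a nontrivial factor to $c(p,q)$, namely those with $(p \wedge q)_k = 1$, and reading the empty product (when $p \wedge q = 0$) as $1$, which is consistent with $s(p,q) = \pm 1$ in that case.
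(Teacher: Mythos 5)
Your proof is correct and follows essentially the same route as the paper: both rest on the factorization $s(p,q) = c(p,q)\lambda^{\alpha(p,q)}$ from the proof of Theorem~\ref{XOR algebra : XOR thm}, bound $\lambda^{\alpha(p,q)} = \pm 1$ via Proposition~\ref{XOR algebra : commutation property lambda values}, and then read off each part from $\beta(k) = u_{k+1}^2 \in \{-1,0,+1\}$. Your explicit formula $c(p,q) = \prod_{k=0}^{n-1} (u_{k+1}^2)^{(p \wedge q)_k}$ is a slightly sharper statement than the paper's ``$c(p,q)$ contains only appearing factors $\beta(k)$,'' but it is the same argument.
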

\begin{proof}
    From the proof of the theorem \ref{XOR algebra : XOR thm}, in the table (\ref{XOR algebra : XOR thm table}) and equation (\ref{XOR algebra : XOR thm last equation}), we see that the multiplier function is 
    $$ s(p,q) := c(p,q)\lambda^{\alpha(p,q)} $$
    where $c(p,q)$ contains only appearing factors $\beta(k) \in \{ -1, 0 , +1 \}$. Clearly,
    \begin{enumerate}
        \item $c(p,q) \in \{ -1, 0 , +1 \}$, and
        \item if $u_i^2 \in \{ -1 ,1 \}, \forall i$, then $c(p,q) \in \{ -1 , +1 \}$.
    \end{enumerate}
    From proposition \ref{XOR algebra : commutation property lambda values}, $\lambda = \pm 1 \Rightarrow \lambda^{\alpha(p,q)} \in \{ -1,+1 \}$. Then $s(p,q) \in \{ -1, 0, +1 \}$ and $s(p,q) = 0$ if and only if $c(p,q) = 0$. For the third case, if $u_i^2 = 1, \forall i$ and $\lambda=1$, then $s(p,q) = c(p,q)\lambda^{\alpha(p,q)} = 1, \forall p,q$.
\end{proof}

\section{Multiplicative diagonal basis}
\label{section : multiplicative diagonal basis}

We may be interested in a general expression for the multiplier and index functions when a linear transformation is applied to obtain a new basis. The result of such an operation can be quite complicated, and does not guarantee the existence of a multiplier and index function in the new basis obtained. However, we can make some assumptions depending on what we are looking for. Something interesting and generally useful to simplify computations is a multiplicative diagonal (or canonical) basis.
\begin{definition}
    \label{multiplicative diagonal basis : diagonal basis def}
    A basis $\{ {\tilde e}_{\alpha} \}_{\alpha=0}^{n-1}$ of $V$ is said to be diagonal or canonical if
    $$ {\tilde e}_{\alpha} * {\tilde e}_{\beta} = \tilde s(\alpha,\beta) {\tilde e}_{\tilde r (\alpha,\beta)}, $$
    $$ \tilde s(\alpha,\beta) = \delta_{\alpha,\beta} \quad \text{and} \quad \tilde r(\alpha,\beta) = \alpha \delta_{\alpha,\beta}, \quad \alpha,\beta = 0,\ldots,n-1, $$
    where $\delta_{\alpha,\beta}$ is the Kronecker delta.
\end{definition}
An immediate result of this definition is the following proposition, since ${\tilde e}_{\alpha} * {\tilde e}_{\beta} = 0$ for all $\alpha \neq \beta$ if such a diagonal basis exists.
\begin{proposition}
    If $V$ is an algebra of dimension $n \geq 2$ and has a diagonal basis, then all elements of this basis are zero divisors.
\end{proposition}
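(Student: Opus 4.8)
The plan is to exploit the off-diagonal vanishing built into the definition of a diagonal basis. By Definition~\ref{multiplicative diagonal basis : diagonal basis def}, whenever $\alpha \neq \beta$ we have $\tilde{s}(\alpha,\beta) = \delta_{\alpha,\beta} = 0$, so that $\tilde{e}_\alpha * \tilde{e}_\beta = 0$. This is precisely the product identity flagged in the sentence preceding the statement, and it is the only structural fact the argument needs.

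First I would fix an arbitrary index $\alpha \in \{0,\ldots,n-1\}$ and show that $\tilde{e}_\alpha$ is a zero divisor by producing a nonzero partner that it annihilates. Since the algebra has dimension $n \geq 2$, the index set $\{0,\ldots,n-1\}$ contains at least two distinct elements, so I may choose some $\beta \neq \alpha$. The diagonal product rule then yields $\tilde{e}_\alpha * \tilde{e}_\beta = 0$, and by the symmetry $\tilde{s}(\beta,\alpha)=0$ one also has $\tilde{e}_\beta * \tilde{e}_\alpha = 0$, so each element is both a left and a right zero divisor.

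To close the argument I would observe that both factors are nonzero: as members of a basis they are linearly independent, hence in particular nonzero vectors of $V$. Thus $\tilde{e}_\alpha$ is a nonzero element admitting a nonzero $\tilde{e}_\beta$ with $\tilde{e}_\alpha * \tilde{e}_\beta = 0$, which is exactly the definition of a zero divisor. As $\alpha$ was arbitrary, every element of the diagonal basis is a zero divisor.

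The argument is essentially immediate, so there is no genuine technical obstacle; the only points requiring care are the two hypotheses that make it work. The condition $n \geq 2$ is exactly what guarantees a distinct index $\beta$ to pair with $\alpha$, supplying an off-diagonal product to appeal to; in dimension one only the idempotent relation $\tilde{e}_0 * \tilde{e}_0 = \tilde{e}_0$ is available, and no element need be a zero divisor. The nonvanishing of basis elements is what certifies that the annihilating partner is a legitimate witness. Neither point calls for computation, only the observation that a basis consists of nonzero, linearly independent vectors.
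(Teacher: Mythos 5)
Your proof is correct and follows exactly the route the paper intends: it justifies the proposition with the single remark, placed just before the statement, that $\tilde{e}_\alpha * \tilde{e}_\beta = 0$ whenever $\alpha \neq \beta$, with $n \geq 2$ supplying a distinct index and the linear independence of basis elements supplying the nonzero witness. You have merely written out carefully what the paper leaves as an immediate consequence of Definition~\ref{multiplicative diagonal basis : diagonal basis def}.
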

\begin{theorem}
    \label{multiplicative diagonal basis : basic thm}
    Let $V$ be an algebra over $\mathbb{K}$ of dimension $n$ and denotes its bilinear binary operation by $*$. Suppose $V$ has a basis $\{ \mathrm e_k \}_{k=0}^{n-1}$ satisfying
    $$ \mathrm e_p * \mathrm e_q = s(p,q) \mathrm e_{r(p,q)}, \quad p,q = 0,\ldots,n-1 $$
    for some functions $s:\{0,\ldots,n-1\}^2 \rightarrow \mathbb{K}$ and $r:\{0,\ldots,n-1\}^2 \rightarrow \{0,\ldots,n-1\}$ and let $\{ {\tilde e}_{\alpha} \}_{\alpha=0}^{n-1}$ be a diagonal basis, related to the first one by the change-of-coordinates matrix $T$ \cite{sergeLangLinear, Lipschutz}:
    $$ \mathrm e_k = \sum_{\alpha=0}^{n-1} T_{\alpha, k} {\tilde e}_{\alpha}, \quad k = 0,\ldots, n-1. $$
    Then
    $$ s(p,q) T_{\alpha, r(p,q)} = T_{\alpha, p} T_{\alpha, q}, \quad \alpha,p,q=0,\ldots,n-1. $$
\end{theorem}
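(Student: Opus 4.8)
The plan is to compute the single product $\mathrm e_p * \mathrm e_q$ in two different ways and then compare coefficients in the diagonal basis. The hypothesis already expresses this product along one axis of the original basis, while expanding each factor through the change-of-coordinates matrix $T$ rewrites the same product entirely in the diagonal basis, where the multiplication is almost trivial. Matching the two expressions componentwise against the linearly independent family $\{\tilde e_\alpha\}$ will then yield the identity for every $\alpha$ simultaneously.

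Concretely, first I would substitute $\mathrm e_{r(p,q)} = \sum_\alpha T_{\alpha, r(p,q)}\,\tilde e_\alpha$ into the hypothesis to obtain
$$ \mathrm e_p * \mathrm e_q = s(p,q)\, \mathrm e_{r(p,q)} = \sum_{\alpha=0}^{n-1} s(p,q)\, T_{\alpha, r(p,q)}\, \tilde e_\alpha. $$
Second, I would expand each factor, $\mathrm e_p = \sum_\alpha T_{\alpha,p}\,\tilde e_\alpha$ and $\mathrm e_q = \sum_\beta T_{\beta,q}\,\tilde e_\beta$, and use the $\mathbb{K}$-bilinearity of $*$ to pull the scalars out:
$$ \mathrm e_p * \mathrm e_q = \sum_{\alpha,\beta=0}^{n-1} T_{\alpha,p}\,T_{\beta,q}\,(\tilde e_\alpha * \tilde e_\beta). $$
Third, I would invoke Definition~\ref{multiplicative diagonal basis : diagonal basis def}: since $\tilde s(\alpha,\beta)=\delta_{\alpha,\beta}$ and $\tilde r(\alpha,\beta)=\alpha\,\delta_{\alpha,\beta}$, one has $\tilde e_\alpha * \tilde e_\beta = \delta_{\alpha,\beta}\,\tilde e_\alpha$, so only the diagonal terms $\beta=\alpha$ survive and the double sum collapses to $\sum_\alpha T_{\alpha,p}\,T_{\alpha,q}\,\tilde e_\alpha$.

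Finally, because $\{\tilde e_\alpha\}_{\alpha=0}^{n-1}$ is a basis and hence linearly independent, the coefficient of each $\tilde e_\alpha$ must agree between the two computations, giving $s(p,q)\,T_{\alpha,r(p,q)} = T_{\alpha,p}\,T_{\alpha,q}$ for all $\alpha,p,q$. There is no serious obstacle here; the only points demanding care are the clean application of bilinearity over all four cross terms and the observation that the vanishing of the off-diagonal products is exactly what forces the single-index factorization on the right. One mild subtlety worth flagging is that $r(p,q)$ is itself a basis index, so $T_{\alpha,r(p,q)}$ denotes the entry in the column labelled by the integer $r(p,q)$; keeping this indexing straight is the main bookkeeping task.
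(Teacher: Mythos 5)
Your proposal is correct and takes essentially the same approach as the paper's own proof: both compute $\mathrm e_p * \mathrm e_q$ in two ways---once via the hypothesis as $s(p,q)\,\mathrm e_{r(p,q)} = \sum_{\alpha} s(p,q)\, T_{\alpha, r(p,q)}\, \tilde e_{\alpha}$, and once by bilinear expansion through $T$ with $\tilde e_{\alpha} * \tilde e_{\beta} = \delta_{\alpha,\beta}\, \tilde e_{\alpha}$ collapsing the double sum---and conclude by linear independence of $\{ \tilde e_{\alpha} \}$. The only difference is the order in which the two expansions are presented, which is immaterial.
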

\begin{proof}
    Let's apply the binary operation $*$ on an arbitrary pair of basis elements $\mathrm e_p, \mathrm e_q$:
    \begin{equation} 
        \label{multiplicative diagonal basis : diagonal basis thm first part}
        \begin{aligned}
            \mathrm e_p * \mathrm e_q &= \sum_{\alpha,\beta=0}^{n-1} T_{\alpha, p} T_{\beta, q} ({\tilde e}_{\alpha} * {\tilde e}_{\beta})\\
            &= \sum_{\alpha,\beta=0}^{n-1} T_{\alpha, p} T_{\beta, q} \delta_{\alpha\beta} {\tilde e}_{\alpha \delta(\alpha,\beta)}\\
            &= \sum_{\alpha=0}^{n-1} T_{\alpha, p} T_{\alpha, q} {\tilde e}_{\alpha}.
        \end{aligned} 
    \end{equation}
    The left-hand side can be expanded into 
    \begin{equation} 
        \label{multiplicative diagonal basis : diagonal basis thm second part}
        \mathrm e_p * \mathrm e_q = s(p,q) \mathrm e_{r(p,q)} = \sum_{\alpha=0}^{n-1} s(p,q) T_{\alpha, r(p,q)} {\tilde e}_{\alpha}.
    \end{equation}
    From the linear independence of the basis elements ${\tilde e}_{\alpha}, \alpha=0,\ldots,n-1$, combining (\ref{multiplicative diagonal basis : diagonal basis thm first part}) and (\ref{multiplicative diagonal basis : diagonal basis thm second part}) implies
    $$ s(p,q) T_{\alpha, r(p,q)} = T_{\alpha, p} T_{\alpha, q}, \quad \alpha,p,q=0,\ldots,n-1. $$
\end{proof}

\subsection{Diagonal basis from the XOR index function}
\label{subsection : diagonal basis for the XOR index function}

As we saw in section \ref{subsection : some examples} and \ref{section : Commutation property and XOR index function}, the XOR function is a recurring pattern and at the heart of many algebra and hypercomplex number spaces. So we now specialize to the case where $V$ is a vector space of dimension $2^n$ equipped with the bilinear operation $*$, and where the index function of the basis $\{ \mathrm e_k \}_{k=0}^{2^n-1}$ is $r(p,q) = p \oplus q$, with no restriction on the multiplier function $s(p,q)$. For the following, we also suppose that there is a diagonal basis $\{ {\tilde e}_{\alpha} \}_{\alpha=0}^{2^n-1}$ related to the initial basis through the change-of-coordinates matrix $T$, so that theorem \ref{multiplicative diagonal basis : basic thm} applies. This set of assumptions will lead to the necessary conditions to get such a basis from the XOR index function.
\begin{lemma}
    \label{diagonal basis XOR : T nonzero}
    All entries of the change-of-coordinates matrix $T$ are nonzero i.e $T_{\alpha,p} \neq 0$ for all $\alpha,p = 0,\ldots, 2^n-1$.
\end{lemma}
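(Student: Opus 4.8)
The plan is to exploit the involutive nature of the XOR operation together with the invertibility of the change-of-coordinates matrix $T$. Since $T$ expresses one basis in terms of another, it is invertible, and hence none of its rows and none of its columns can vanish identically. The engine of the whole argument is Theorem~\ref{multiplicative diagonal basis : basic thm} specialized to the present index function $r(p,q)=p\oplus q$, namely $s(p,q)\,T_{\alpha,\,p\oplus q}=T_{\alpha,p}\,T_{\alpha,q}$, valid for all $\alpha,p,q$.

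The first move I would make is to set $q=p$. Because $p\oplus p=0$ in the group $(\mathbb{Z}/2)^n$, this collapses the relation to the single diagonal identity $s(p,p)\,T_{\alpha,0}=T_{\alpha,p}^2$, which will serve as the workhorse. From it I would first establish that the zeroth column is nowhere zero: fixing $\alpha$ and assuming $T_{\alpha,0}=0$ would force $T_{\alpha,p}^2=0$ for every $p$, and since $\mathbb{K}$ is a field (hence has no nonzero nilpotents) this gives $T_{\alpha,p}=0$ for all $p$, i.e. row $\alpha$ is identically zero, contradicting the invertibility of $T$. Thus $T_{\alpha,0}\neq 0$ for every $\alpha$.

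Next I would show that $s(p,p)\neq 0$ for every $p$. Fixing $p$, the invertibility of $T$ guarantees that column $p$ is nonzero, so there exists some $\alpha^{*}$ with $T_{\alpha^{*},p}\neq 0$; the diagonal identity then reads $s(p,p)\,T_{\alpha^{*},0}=T_{\alpha^{*},p}^2\neq 0$, and because a field has no zero divisors this forces $s(p,p)\neq 0$. To finish, for arbitrary $\alpha$ and $p$ the diagonal identity gives $T_{\alpha,p}^2=s(p,p)\,T_{\alpha,0}$, whose right-hand side is a product of two nonzero field elements and is therefore nonzero, so $T_{\alpha,p}\neq 0$.

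The subtlety I expect to be the main obstacle is that the hypothesis places \emph{no} restriction on the multiplier function $s$, so the naive idea of propagating a single vanishing entry $T_{\alpha,p}=0$ through the general relation fails, since $s(p,q)$ may itself vanish and break the implication. The key insight that circumvents this is to confine attention to the diagonal $q=p$, where the involutivity of XOR produces the far more rigid relation $T_{\alpha,p}^2=s(p,p)\,T_{\alpha,0}$; pairing it with the two nonvanishing consequences of invertibility (no zero row, no zero column) is precisely what pins down both $T_{\alpha,0}\neq 0$ and $s(p,p)\neq 0$ and closes the argument.
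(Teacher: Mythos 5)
Your proof is correct and runs on the same engine as the paper's: the diagonal specialization $s(p,p)\,T_{\alpha,0}=T_{\alpha,p}^2$ of Theorem~\ref{multiplicative diagonal basis : basic thm} (using $p\oplus p=0$), combined with the fact that an invertible $T$ has no identically zero row or column and that $\mathbb{K}$ has no zero divisors. Your write-up is in fact slightly leaner than the paper's, which detours through $r(p,0)=p$ and $r(0,q)=q$ to write $T_{\alpha,0}=s(p'_\alpha,0)=s(0,q'_\alpha)$ and closes with a contradiction propagating a hypothetical zero entry $T_{\alpha_0,p_0}=0$ into an entire zero column, whereas you obtain $T_{\alpha,0}\neq 0$ and $s(p,p)\neq 0$ directly from the diagonal identity and conclude without contradiction.
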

\begin{proof}
    Notice that $r(p,0) = p \oplus 0 = p$, so from theorem \ref{multiplicative diagonal basis : basic thm},
    \begin{equation}
        \label{diagonal basis XOR : lemma T q zero}
        s(p,0) T_{\alpha, p} = T_{\alpha, p} T_{\alpha, 0}.
    \end{equation}
    For any $\alpha \in \{ 0,\ldots,2^n-1 \}$, we can find at least one $p'_\alpha \in \{ 0,\ldots, 2^n-1 \}$ such that $T_{\alpha, p'_\alpha} \neq 0$, unless the whole $\alpha$th row of $T$ contains zeros, which is a contradiction ($T$ is invertible by definition). Then from (\ref{diagonal basis XOR : lemma T q zero})
    \begin{equation}
        \label{diagonal basis XOR : lemma T nonzero free row}
        T_{\alpha, 0} = s(p'_\alpha,0).
    \end{equation}
    In the same way, we can find
    \begin{equation}
        \label{diagonal basis XOR : lemma T p zero}
        s(0,q) T_{\alpha, q} = T_{\alpha, 0} T_{\alpha, q},
    \end{equation}
    \begin{equation}
        \label{diagonal basis XOR : lemma T nonzero free col}
        T_{\alpha, 0} = s(0,q_{\alpha}'),
    \end{equation}
    for values $q_{\alpha}' \in \{ 0,\ldots,2^m-1 \}$ such that $T_{\alpha,q_{\alpha}'} \neq 0$. A second useful property of the XOR is that $r(p,p) = p \oplus p = 0$ for any value of $p$, so from theorem \ref{multiplicative diagonal basis : basic thm}, equations (\ref{diagonal basis XOR : lemma T nonzero free row}) and (\ref{diagonal basis XOR : lemma T nonzero free col}),
    \begin{equation}
        \label{diagonal basis XOR : lemma T q is p}
        T_{\alpha,p}^2 = s(p,p) T_{\alpha,0} = s(p,p) s(p'_\alpha,0) = s(p,p) s(0,q_{\alpha}').
    \end{equation}
    From the definition of the values $p'_{\alpha}$ above, $T_{\alpha, p'_\alpha} \neq 0$ which implies from (\ref{diagonal basis XOR : lemma T q is p}),
    $$ s(p'_\alpha,p'_\alpha) s(p'_\alpha,0) = T^2_{\alpha, p'_\alpha} \neq 0 $$
    \begin{equation}
        \label{diagonal basis XOR : lemma T s nonzero}
        \Rightarrow \quad s(p_{\alpha}', p_{\alpha}') \neq 0 \quad \text{and} \quad s(p_{\alpha}',0) \neq 0.
    \end{equation}
    The last part of the proof is done by contradiction. Suppose there is a pair of values $\alpha_0, p_0$ such that $T_{\alpha_0, p_0} = 0$. From (\ref{diagonal basis XOR : lemma T q is p}) and (\ref{diagonal basis XOR : lemma T s nonzero}),
    $$ 0 = s(p_0,p_0) s(p'_{\alpha_0},0) \Rightarrow s(p_0,p_0) = 0. $$
    Then for any $\alpha = 0,\ldots,2^n-1$,
    $$ T_{\alpha,p_0}^2 = s(p_0,p_0) s(p'_\alpha,0) = 0 $$
    which means the whole $p_0$th column of $T$ is made of zeros, again a contradiction since $T$ is invertible. So no such pair $\alpha_0, p_0$ of values exists, and $T_{\alpha,p} \neq 0$ for all $\alpha,p = 0, \ldots, 2^n-1$.
\end{proof}
\begin{theorem}
    For all $\alpha, p, q = 0,\ldots, 2^n-1$,
    \begin{enumerate}
        \item $T_{\alpha,0} = s(0,0)$,
        \item $T_{\alpha,p}^2 = s(p,p) s(0,0)$,
        \item $s(p,q) = s(q,p)$,
        \item $s(p,p) \neq 0$,
        \item $s(p,0) = s(0,q) = s(0,0)$,
        \item $s(p,p) s(0,0)$ has a square root relative to $*$, with at least two distinct solutions in $\mathbb{K}$.
    \end{enumerate}
\end{theorem}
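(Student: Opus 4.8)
The plan is to read off every item from the single master identity furnished by Theorem~\ref{multiplicative diagonal basis : basic thm}, which for the XOR index function $r(p,q)=p\oplus q$ reads
\[
s(p,q)\,T_{\alpha,\,p\oplus q} = T_{\alpha,p}\,T_{\alpha,q}, \qquad \alpha,p,q=0,\ldots,2^n-1,
\]
combined with the two arithmetic facts $p\oplus 0 = p$ and $p\oplus p = 0$ and with Lemma~\ref{diagonal basis XOR : T nonzero}, which guarantees that every factor $T_{\alpha,p}$ may be cancelled. Items 1--5 then fall out by substituting the special arguments $q=0$ and $q=p$; the only item needing a genuinely new idea is item 6, where invertibility of $T$ enters.

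First I would set $q=0$. Since $p\oplus 0=p$, the master identity becomes $s(p,0)\,T_{\alpha,p}=T_{\alpha,p}\,T_{\alpha,0}$, and cancelling the nonzero factor $T_{\alpha,p}$ gives $s(p,0)=T_{\alpha,0}$ for all $p,\alpha$. The right-hand side carries no $p$, so $s(p,0)$ is independent of $p$; taking $p=0$ yields item 1, $T_{\alpha,0}=s(0,0)$, and feeding this back gives $s(p,0)=s(0,0)$. The symmetric choice $p=0$ gives $s(0,q)=s(0,0)$, and together these are item 5. Next I would set $q=p$; using $p\oplus p=0$ the identity becomes $s(p,p)\,T_{\alpha,0}=T_{\alpha,p}^{2}$, and substituting item 1 produces item 2, $T_{\alpha,p}^{2}=s(p,p)\,s(0,0)$. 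Item 4 is then immediate: since $T_{\alpha,p}\neq 0$ and $\mathbb{K}$ is a field, $T_{\alpha,p}^{2}\neq 0$, so $s(p,p)\,s(0,0)\neq 0$ and hence $s(p,p)\neq 0$ (and incidentally $s(0,0)\neq 0$). Item 3 follows without any special substitution: because $\oplus$ is symmetric and $\mathbb{K}$ is commutative, both $s(p,q)$ and $s(q,p)$ multiply the same nonzero entry $T_{\alpha,\,p\oplus q}$ to yield the same product $T_{\alpha,p}T_{\alpha,q}=T_{\alpha,q}T_{\alpha,p}$, so cancelling gives $s(p,q)=s(q,p)$.

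The hard part will be item 6. Existence of a square root of $D_p:=s(p,p)\,s(0,0)$ is free, since item 2 already exhibits $T_{\alpha,p}$ as one. The delicate claim is that there are at least two distinct square roots, and here I would invoke invertibility of $T$. By item 1 the zeroth column of $T$ is the constant vector $s(0,0)\,\mathbf{1}$, while by item 2 every entry of the $p$th column is a square root of the fixed scalar $D_p$. In a field a nonzero element has at most two square roots, so the $p$th column takes at most two values; were it to take only one, it would be a constant vector, hence proportional to the zeroth column, contradicting $\det T\neq 0$ when $p\neq 0$. Thus for $p\neq 0$ the column must realize both roots $\pm\sqrt{D_p}$, which are therefore distinct elements of $\mathbb{K}$. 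The remaining case $p=0$ gives $D_0=s(0,0)^{2}$ with roots $\pm s(0,0)$, distinct because $s(0,0)\neq 0$ provided $\operatorname{char}\mathbb{K}\neq 2$ — a hypothesis automatically met over the fields $\mathbb{R}$ and $\mathbb{C}$ of interest here. I expect this invertibility-and-counting step, together with the characteristic caveat, to be the only real obstacle; everything else is substitution and cancellation.
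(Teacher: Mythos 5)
Your proof is correct and takes essentially the same route as the paper: items 1--5 by substituting $q=0$ and $q=p$ into the master identity $s(p,q)\,T_{\alpha,p\oplus q}=T_{\alpha,p}T_{\alpha,q}$ and cancelling nonzero entries of $T$ (Lemma~\ref{diagonal basis XOR : T nonzero}), and item 6 by the same contradiction that a single-valued column would be collinear with the zeroth column $s(0,0)\mathbf{1}$, violating invertibility of $T$. If anything, your item 6 is slightly more careful than the paper's proof, which tacitly assumes $p\neq 0$ and $\operatorname{char}\mathbb{K}\neq 2$; your explicit treatment of the $p=0$ case (roots $\pm s(0,0)$, distinct only when the characteristic is not $2$) fills that small gap.
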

\begin{proof}
    From $r(p,0) = p$, $r(0,q) = q$, theorem \ref{multiplicative diagonal basis : basic thm} and lemma \ref{diagonal basis XOR : T nonzero},
    \begin{equation}
        \label{diagonal basis XOR : T thm first part}
        T_{\alpha,0} = s(p,0) = s(0,q)
    \end{equation}
    which is true for all $\alpha,p,q = 0,\ldots,2^n-1$ and proves both conditions 1. and 5. Using $r(p,p) = 0$ and (\ref{diagonal basis XOR : T thm first part}),
    $$ s(p,p) T_{\alpha,0} = T_{\alpha,p}^2 \Rightarrow T_{\alpha,p}^2 = s(p,p) s(0,0). $$
    From lemma \ref{diagonal basis XOR : T nonzero}, $T_{\alpha,p} \neq 0$ so $s(p,p) \neq 0$ as well for all values of $p$, which validates both conditions 2. and 4. For condition 3., we use the fact that the XOR function is commutative $r(p,q) = r(q,p)$, so we get
    $$ s(p,q) T_{\alpha, r(p,q)} = T_{\alpha,p} T_{\alpha,q} = s(q,p) T_{\alpha, r(q,p)} = s(q,p) T_{\alpha, r(p,q)} $$
    and $T_{\alpha, r(p,q)} \neq 0$, implying that $s(p,q) = s(q,p)$. Lastly, suppose that there is only a single solution $c(p) \in \mathbb{K}$ such that
    $$ T_{\alpha,p} = \sqrt{s(p,p) s(0,0)} = c(p) $$
    then the $p$th column of $T$ is collinear with the $0$th column $T_{\alpha,0} = s(0,0)$, which is a contradiction since $T$ is a change-of-coordinates matrix. This implies that at least two solutions for the square root of $s(p,p) s(0,0)$ must exist in order to get linearly independent columns.
\end{proof}
\begin{remark}
    This theorem gives both the equations to build the change-of-coordinates matrix from the multiplier function and a set of necessary conditions (constraints) on $s(p,q)$ for the diagonal basis to exist. It is, however, not a set of sufficient conditions and cannot assert by itself the existence of such a basis.
\end{remark}
The next corollary follows directly from the previous theorem, since if $\mathrm e_0^2 = \mathrm e_0$, then $s(0,0) = 1$, which is also true if $\mathrm e_0$ is the multiplicative neutral element.
\begin{corollary}
    \label{XOR diagonal basis : corollary}
    If $\mathrm e_0^2 = \mathrm e_0$ from the initial basis or if $\mathrm e_0$ is a neutral element relative to $*$, then for all $\alpha, p, q = 0,\ldots, 2^n-1$,
    \begin{enumerate}
        \item $T_{\alpha,0} = s(0,0) = 1$,
        \item $T_{\alpha,p}^2 = s(p,p)$,
        \item $s(p,q) = s(q,p)$,
        \item $s(p,p) \neq 0$,
        \item $s(p,0) = s(0,q) = 1$,
        \item $s(p,p)$ has a square root relative to $*$, with at least two distinct solutions in $\mathbb{K}$.
    \end{enumerate}
\end{corollary}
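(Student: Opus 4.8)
The plan is to deduce this corollary directly from the preceding theorem by showing that each of the two hypotheses forces $s(0,0) = 1$, after which every one of the six conclusions is obtained by substituting this value into the corresponding statement of the theorem.

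First I would compute $\mathrm e_0 * \mathrm e_0$ using the XOR index function. Since $r(0,0) = 0 \oplus 0 = 0$, the defining relation gives $\mathrm e_0 * \mathrm e_0 = s(0,0)\,\mathrm e_{r(0,0)} = s(0,0)\,\mathrm e_0$. In the first case, the hypothesis $\mathrm e_0^2 = \mathrm e_0$ reads $s(0,0)\,\mathrm e_0 = \mathrm e_0$, and since $\mathrm e_0 \neq 0$ (it is a basis element) the linear independence of the basis forces $s(0,0) = 1$. In the second case, if $\mathrm e_0$ is the multiplicative neutral element then in particular $\mathrm e_0 * \mathrm e_0 = \mathrm e_0$, which is exactly the relation just treated, so again $s(0,0) = 1$.

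With $s(0,0) = 1$ in hand, I would substitute into the six conclusions of the theorem. Conclusion 1, $T_{\alpha,0} = s(0,0)$, becomes $T_{\alpha,0} = s(0,0) = 1$; conclusion 2, $T_{\alpha,p}^2 = s(p,p)\,s(0,0)$, becomes $T_{\alpha,p}^2 = s(p,p)$; conclusion 5, $s(p,0) = s(0,q) = s(0,0)$, becomes $s(p,0) = s(0,q) = 1$; and conclusion 6, asserting that $s(p,p)\,s(0,0)$ has a square root with at least two distinct solutions in $\mathbb{K}$, becomes the same statement for $s(p,p)$. Conclusions 3 and 4, namely $s(p,q) = s(q,p)$ and $s(p,p) \neq 0$, carry over verbatim since they do not involve the factor $s(0,0)$.

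There is essentially no obstacle here, as the corollary is a specialization of the theorem. The only point requiring any care is verifying $s(0,0) = 1$ under the neutral-element hypothesis, which rests on the observation that the identity property applied to the pair $(\mathrm e_0, \mathrm e_0)$ reproduces precisely the idempotent relation $\mathrm e_0^2 = \mathrm e_0$; once this reduction is made, the two hypotheses coincide for the purposes of the argument, and the substitution step is purely mechanical.
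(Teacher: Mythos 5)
Your proposal is correct and follows essentially the same route as the paper, which likewise observes that either hypothesis forces $s(0,0)=1$ (via $\mathrm e_0 * \mathrm e_0 = s(0,0)\,\mathrm e_{0\oplus 0} = s(0,0)\,\mathrm e_0$) and then reads off all six conclusions by substituting this value into the preceding theorem. Your write-up merely makes explicit the linear-independence step and the reduction of the neutral-element case to the idempotent case, both of which the paper leaves implicit in its one-sentence justification.
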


\sloppy\section{Diagonal basis in hypercomplex algebras}
\label{Diagonal basis in hypercomplex algebras}

Using the tools of section \ref{subsection : diagonal basis for the XOR index function}, we can now find which hypercomplex algebras (definition \ref{XOR algebra : hypercomplex algebra def}) may have a diagonal basis or not. The Sylvester-Hadamard matrices will be introduced later for a direct algorithm-like method to find the change-of-coordinates matrix, linking the standard basis to the diagonal basis, in specific hypercomplex algebras in which the necessary conditions of corollary \ref{XOR diagonal basis : corollary} are met. This proves at the same time the converse of the corollary by ensuring the existence of the diagonal basis for these particular cases.

Conditions 3. and 4. of corollary \ref{XOR diagonal basis : corollary} greatly restricts the class of algebras containing a diagonal basis, since any with a principal unit $u_i^2 = 0$ is excluded, as well as any that is not commutative. For this reason, we restrict to the cases where $u_i^2 \in \{ -1 ,+1 \}$. Proposition \ref{XOR algebra : commutation property lambda values} then separates, by setting $\lambda = -1$ or $\lambda = +1$, the hypercomplex algebras into two distinct subclasses : Clifford algebras with anticommutative units and another subclass with commutative units:
$$ \begin{aligned}
    & \lambda = -1 \Rightarrow u_i u_j = -u_j u_i, \quad i \neq j,\\
    & \lambda = +1 \Rightarrow u_i u_j = u_j u_i.
\end{aligned} $$
We will see later how this subclass is related to multiperplex and multicomplex numbers.
\begin{proposition}
    The only Clifford algebras with a diagonal basis are:
    \begin{enumerate}
        \item $\mathrm{Cl}_{0,0}(\mathbb{R}) \simeq \mathbb{R}$,
        \item $\mathrm{Cl}_{1,0}(\mathbb{R}) \simeq \mathbb{D}$ (hyperbolic numbers),
        \item $\mathrm{Cl}_{0,0}(\mathbb{C}) \simeq \mathbb{C}$,
        \item $\mathrm{Cl}_{1,0}(\mathbb{C}) \simeq \mathbb{M}_2(\mathbb{R})$ (bicomplex numbers),
        \item $\mathrm{Cl}_{0,1}(\mathbb{C}) \simeq \mathbb{M}_2(\mathbb{R})$ (bicomplex numbers).
    \end{enumerate}
\end{proposition}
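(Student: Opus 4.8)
The plan is to combine the anticommutation structure of Clifford algebras (the case $\lambda=-1$) with the necessary conditions of Corollary \ref{XOR diagonal basis : corollary}, treating the base field $\mathbb{K}\in\{\mathbb{R},\mathbb{C}\}$ as the decisive extra ingredient. Since a Clifford algebra already has $\lambda=-1$ and $u_i^2\in\{-1,+1\}$, Proposition \ref{XOR algebra : values of s} gives $s(p,q)\in\{-1,+1\}$, so conditions 1, 2, 4, 5 of the corollary are automatically compatible and the genuine content sits in condition 3 (symmetry of $s$) and condition 6 (two distinct square roots of $s(p,p)$ in $\mathbb{K}$).

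First I would show that condition 3 forces the number $n$ of principal units to satisfy $n\leq 1$. Assume $n\geq 2$ and pick two distinct principal units $u_i,u_j$, with indices $p=2^{i-1}$ and $q=2^{j-1}$. Then $\mathrm e_p\mathrm e_q=u_iu_j$, while the anticommutation relation ($\lambda=-1$) gives $\mathrm e_q\mathrm e_p=u_ju_i=-u_iu_j$. Because $r(p,q)=r(q,p)=p\oplus q$ by Theorem \ref{XOR algebra : XOR thm} and $\mathrm e_{p\oplus q}\neq 0$, this yields $s(q,p)=-s(p,q)$ with $s(p,q)=\pm 1\neq 0$, contradicting condition 3. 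Hence a diagonal basis can exist only when $n=0$ or $n=1$; equivalently, the algebra must be commutative.

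Next I would dispose of the two surviving cases. For $n=0$ the algebra is one-dimensional, equal to $\mathbb{K}$ itself, and is trivially diagonal, producing $\mathrm{Cl}_{0,0}(\mathbb{R})\simeq\mathbb{R}$ and $\mathrm{Cl}_{0,0}(\mathbb{C})\simeq\mathbb{C}$. For $n=1$ there is a single generator $u_1$ with $u_1^2=\epsilon\in\{-1,+1\}$, so condition 3 is vacuous and the only nontrivial multiplier value is $s(1,1)=\epsilon$; condition 6 then demands that $\epsilon$ admit two distinct square roots in $\mathbb{K}$. When $\epsilon=+1$ the roots $\pm 1$ exist in both fields, giving $\mathrm{Cl}_{1,0}(\mathbb{R})\simeq\mathbb{D}$ and $\mathrm{Cl}_{1,0}(\mathbb{C})$; when $\epsilon=-1$ a square root of $-1$ exists only in $\mathbb{C}$, so $\mathrm{Cl}_{0,1}(\mathbb{R})\simeq\mathbb{C}$ is excluded (consistently, it is a field, hence has no zero divisors, which already rules out a diagonal basis), while $\mathrm{Cl}_{0,1}(\mathbb{C})$ survives. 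This pins down exactly the five listed items.

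Finally, since the corollary supplies only necessary conditions, I would establish the converse for each surviving algebra by exhibiting an explicit diagonal basis: the idempotents $\tfrac{1}{2}(1\pm u_1)$ when $\epsilon=+1$ and $\tfrac{1}{2}(1\pm \mathrm i\,u_1)$ when $\epsilon=-1$ over $\mathbb{C}$ (the case $n=0$ needing nothing), with the resulting change-of-coordinates matrix being precisely the $2\times 2$ Sylvester-Hadamard matrix announced for the next subsection, so this existence half can be deferred to it. The main obstacle is not any single computation but keeping the field-dependence transparent: condition 6 is the only place where $\mathbb{R}$ and $\mathbb{C}$ truly diverge, and the argument must make clear that it is exactly the failure of $-1$ to be a square in $\mathbb{R}$ that removes $\mathrm{Cl}_{0,1}(\mathbb{R})$ while leaving the two complex one-generator cases—isomorphic as algebras but defined over different fields—on the list.
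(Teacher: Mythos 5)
Your proposal is correct and follows essentially the same route as the paper: anticommutation of two distinct generators violates the symmetry condition $s(p,q)=s(q,p)$ of Corollary \ref{XOR diagonal basis : corollary}, forcing at most one principal unit, and the square-root condition then eliminates $\mathrm{Cl}_{0,1}(\mathbb{R})$ because $-1$ is not a square in $\mathbb{R}$. The only cosmetic difference is that you verify existence by exhibiting the idempotents $\tfrac{1}{2}(1\pm u_1)$ and $\tfrac{1}{2}(1\pm \mathrm i\,u_1)$ directly, whereas the paper cites known results for the hyperbolic and bicomplex cases and defers, as you also suggest, to the Sylvester--Hadamard construction of Theorem \ref{theorem : algorithm-like diagonal basis}.
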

\begin{proof}
    If the number of principal units is greater or equal than $2$, then
    $$ u_1 u_2 = - u_2 u_1 $$
    so $s(1,2) = -s(2,1)$ and the resulting algebra does not contain a diagonal basis. Cases 1. and 3. are trivial, while the existence of the diagonal basis in the hyperbolic and bicomplex number spaces is already known \cite{GuillaumeBrouillette, courchesneAndTremblay2025}. The remaining case is $\mathrm{Cl}_{0,1}(\mathbb{R})$, however in this algebra
    $$ u_1^2 = - 1 \Rightarrow s(1,1) = -1 $$
    which doesn't have a square root in $\mathbb{R}$. By corollary \ref{XOR diagonal basis : corollary}, there is no diagonal basis in $\mathrm{Cl}_{0,1}(\mathbb{R})$.
\end{proof}
\begin{definition}
    \label{Diagonal basis in hypercomplex algebras : definition pre-multicomplex/multiperplex}
    We denote by $\mathbb{M}_{a,b}(\mathbb{K}), a,b \in \mathbb{N}$ the hypercomplex algebra over a field $\mathbb{K}$ with commutative units $(\lambda = +1)$, such that
    $$ u_1^2 = \ldots = u_a^2 = +1 \quad \text{and} \quad u_{a+1}^2 = \ldots = u_{a+b}^2 = -1. $$
\end{definition}
\begin{theorem}
    \label{Diagonal basis in hypercomplex algebras : thm equivalence of multicomplex spaces}
    Let $a,b \in \mathbb{N}$. Then we have the following algebra isomorphism
    $$ \mathbb{M}_{a,b}(\mathbb{C}) \simeq \mathbb{M}_{a,b+1}(\mathbb{R}) \simeq \mathbb{M}_{0,a+b+1}(\mathbb{R}) \simeq \mathbb{M}_{0,a+b}(\mathbb{C}). $$
\end{theorem}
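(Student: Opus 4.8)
The plan is to build the entire chain from two elementary ``moves'' on the generating units, each realized by prescribing where the units go and invoking the fact that a commutative hypercomplex algebra $\mathbb{M}_{c,d}(\mathbb{K})$ is the free commutative $\mathbb{K}$-algebra on $c+d$ generators subject only to the relations that the units pairwise commute and square to the prescribed $\pm 1$ (equivalently, it is $\mathbb{K}[x_1,\dots,x_{c+d}]/(x_i^2\mp 1)$). Consequently, to define a $\mathbb{K}$-algebra homomorphism out of $\mathbb{M}_{c,d}(\mathbb{K})$ it suffices to send each generating unit to an element of the target that commutes with the images of the others and squares to the correct value; the assignment then extends uniquely. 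The first move handles the complexification isomorphism $\mathbb{M}_{a,b}(\C)\simeq\mathbb{M}_{a,b+1}(\R)$, and applying it with parameters $(0,a+b)$ also yields the last link $\mathbb{M}_{0,a+b}(\C)\simeq\mathbb{M}_{0,a+b+1}(\R)$; the second move handles the middle link $\mathbb{M}_{a,b+1}(\R)\simeq\mathbb{M}_{0,a+b+1}(\R)$.

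For the complexification link, view $\mathbb{M}_{a,b}(\C)$ as an algebra over $\R$. The imaginary unit $\iota\in\C$ is central, commutes with every unit $u_i$, and satisfies $\iota^2=-1$, so setting $u_{a+b+1}:=\iota$ produces $a+b+1$ commuting units over $\R$ with exactly $a$ of them squaring to $+1$ and $b+1$ squaring to $-1$. I would define the $\R$-algebra map $\Phi:\mathbb{M}_{a,b+1}(\R)\to\mathbb{M}_{a,b}(\C)$ sending the first $a+b$ generators to the corresponding units and the last one to $\iota$. It is a homomorphism by the universal property, and it is surjective because its image contains all the $u_i$ together with $\iota$, hence all of $\C$ and thus the whole algebra. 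Since both sides have $\R$-dimension $2^{a+b+1}$ (the left by definition, the right because $\dim_\R\C=2$ and $\dim_\C\mathbb{M}_{a,b}(\C)=2^{a+b}$), surjectivity forces bijectivity.

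For the sign-flipping link, write the units of $\mathbb{M}_{a,b+1}(\R)$ as $u_1,\dots,u_a$ (squaring to $+1$) and $v_1,\dots,v_{b+1}$ (squaring to $-1$); crucially $b+1\ge 1$, so a $-1$-unit $v_1$ is always available. The key observation is that in a commutative algebra $(u_iv_1)^2=u_i^2\,v_1^2=(+1)(-1)=-1$. I would therefore define $\Psi:\mathbb{M}_{0,a+b+1}(\R)\to\mathbb{M}_{a,b+1}(\R)$ on the $a+b+1$ minus-generators by $w_i\mapsto u_iv_1$ for $i\le a$ and $w_{a+j}\mapsto v_j$ for $1\le j\le b+1$. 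All images commute and square to $-1$, so $\Psi$ is a homomorphism. It is surjective because $v_1^{-1}=-v_1$ lets us recover $u_i=-(u_iv_1)v_1$ from the images, so every original unit lies in the image; matching $\R$-dimensions $2^{a+b+1}$ again upgrades surjectivity to an isomorphism.

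The routine parts are the square and commutation checks and the dimension bookkeeping. The one genuine point to get right is the well-definedness of $\Phi$ and $\Psi$, that is, the universal property that lets an assignment on generators extend to an algebra map; this is where the commutativity hypothesis $\lambda=+1$ is essential, since it guarantees that the only relations among the units are the squares, so any target respecting those relations yields a valid homomorphism. As a consistency check one can verify at the level of the framework of this paper that all four algebras share the XOR index function of Theorem \ref{XOR algebra : XOR thm}, and that $\Phi$ and $\Psi$ are precisely the basis changes (up to signs) carrying one multiplier table $s(p,q)=\prod_k (u_{k+1}^2)^{(p\wedge q)_k}$ to another, confirming that the structures agree as algebras.
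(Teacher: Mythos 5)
Your proposal is correct, but it takes a genuinely different route from the paper's proof for the middle link of the chain. For the two complexification isomorphisms the arguments essentially coincide: the paper writes $z=\sum_k(x_k+iy_k)\mathrm{e}_k$ and adjoins $i$ as one extra commuting unit with $i^2=-1$, which is exactly your map $\Phi$ spelled out in a basis rather than via the universal property. The real divergence is at $\mathbb{M}_{a,b+1}(\mathbb{R})\simeq\mathbb{M}_{0,a+b+1}(\mathbb{R})$: the paper stays inside its multiplier/index framework and argues combinatorially, using the identity $2^{m-1}=\sum_{l}\binom{m}{2l}=\sum_{l}\binom{m}{2l+1}$ to show that whenever $b'\geq 1$, exactly $2^{n-1}$ standard basis elements square to $+1$ and $2^{n-1}$ to $-1$ independently of $(a,b')$, and then concludes that the multiplication tables agree; you instead exhibit the isomorphism on generators, $w_i\mapsto u_iv_1$ for $i\leq a$ and $w_{a+j}\mapsto v_j$, justified by the presentation $\mathbb{K}[x_1,\ldots,x_n]/(x_i^2\mp 1)$, with surjectivity from $u_i=-(u_iv_1)v_1$ and a dimension count upgrading it to a bijection. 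Each approach buys something: the paper's counting yields the $\pm1$-squaring statistics of the standard basis as a byproduct, but its final step (``same multiplication table'') tacitly requires a relabeling of basis elements compatible with the product, which counting alone does not construct; your explicit homomorphism supplies precisely that correspondence (note $\Psi$ carries standard basis elements to $\pm$ standard basis elements), so your version is, if anything, the more airtight of the two. The one fact you invoke without proof --- that $\mathbb{M}_{c,d}(\mathbb{K})$ really is the free commutative algebra on units with prescribed squares, i.e.\ that the canonical surjection from the polynomial quotient is an isomorphism --- is legitimate here, since the paper's definition declares the $2^{c+d}$ composite units to be a basis, and the quotient is spanned by square-free monomials, so both sides have dimension $2^{c+d}$ and the surjection is bijective; commutativity ($\lambda=+1$) is indeed what makes the generator assignments well defined, exactly as you flag.
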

\begin{proof}
    The first isomorphism $\mathbb{M}_{a,b}(\mathbb{C}) \simeq \mathbb{M}_{a,b+1}(\mathbb{R})$ is obtained directly by denoting the standard basis of $\mathbb{M}_{a,b}(\mathbb{C})$ as $\{ \mathrm e_k \}_{k=0}^{2^n-1}$ and the basis of $\mathbb{M}_{a,b+1}(\mathbb{R})$ as $\{ \mathrm e_k \}_{k=0}^{2^n-1} \cup \{ i\mathrm e_k \}_{k=0}^{2^n-1}$ where $i$ is the extra unit such that $i^2 = -1$. Let $z \in \mathbb{M}_{a,b}(\mathbb{C})$, we can write
    $$ z = \sum_{k=0}^{2^n-1} z_k \mathrm e_k = \sum_{k=0}^{2^n-1} (x_k + i y_k) \mathrm e_k = \sum_{k=0}^{2^n-1} x_k \mathrm e_k +  \sum_{k=0}^{2^n-1} y_k (i\mathrm e_k) \in \mathbb{M}_{a,b+1}(\mathbb{R}) $$
    where $x_k, y_k \in \mathbb{R}$. Since $i$ commutes with all the other units, both algebras have the same multiplication table, both on $\mathbb{R}$ and $\mathbb{C}$, so they are isomorphic.

    The second isomorphism $\mathbb{M}_{a,b+1}(\mathbb{R}) \simeq \mathbb{M}_{0,a+b+1}(\mathbb{R})$ is more subtle and requires the following binomial identity \cite{GrahamKnuthPatashnik} for $m \geq 1$
    \begin{equation}
        \label{even odd binomial identities}
        2^{m-1} = \sum_{l=0}^{\lfloor m/2 \rfloor} \binom{m}{2l} = \sum_{l=0}^{\lfloor m/2 \rfloor-1} \binom{m}{2l + 1}.
    \end{equation}
    Let $b' := b+1$ and $n := a + b'$. The principal units of $\mathbb{M}_{a,b'}(\mathbb{R})$ satisfy
    $$ u_1^2 = \ldots = u_a^2 = +1, \quad u_{a+1}^2 = \ldots = u_{a + b'}^2 = -1 $$
    and from equation (\ref{XOR algebra : binary representation}), any element $\mathrm e_p$ of the standard basis can be written
    $$ \mathrm e_p = \prod_{k=0}^{n-1} u_{k+1}^{p_k}, \quad \text{with} \quad p = (p_{n-1}\ldots p_1 p_0)_2, \quad p_k \in \{ 0,1 \}. $$
    We will now prove that the number of elements from the standard basis that square to $-1$ or $+1$ does not depend on the values of $a$ and $b'$ when $b' \geq 1$. Since all the units commute, if $a \geq 1$,
    $$ \mathrm e_p^2 = \prod_{k=0}^{a-1} u_{k+1}^{2p_k} \prod_{k=a}^{n-1} u_{k+1}^{2p_k} = \prod_{k=0}^{a-1} (+1)^{p_k} \prod_{k=a}^{n-1} (-1)^{p_k}. $$
    The value $\mathrm e_p^2 \in \{ +1, -1 \}$ depends on the number of bits in $p$ that are equal to $1$ for $a \leq k \leq n-1$. For an even number of bits, the value is $+1$, for an odd number, the result is $-1$. The first $a$ bits do not affect this value, but still add a factor $2^a$ to the number of possible configurations. This is where the identity (\ref{even odd binomial identities}) becomes handy: the number of configurations for the last $b'$ bits with an even or odd numbers of bits equal to $1$ is $2^{b'-1}$, so we get
    $$ \begin{aligned}
        |\{ p | \mathrm e_p^2=-1 \}| &= 2^a 2^{b'-1} = 2^{n-1},\\
        |\{ p | \mathrm e_p^2=+1 \}| &= 2^a 2^{b'-1} = 2^{n-1}.\\
    \end{aligned} $$
    If $a = 0$, we can see from 
    $$ \mathrm e_p^2 = \prod_{k=0}^{n-1} u_{k+1}^{2p_k} = \prod_{k=0}^{n-1} (-1)^{p_k} $$
    that the result is still the same. Since the algebras have the same dimension and the same multiplication table for any values of $a$ and $b$, we get $\mathbb{M}_{a,b+1}(\mathbb{R}) \simeq \mathbb{M}_{0,a+b+1}(\mathbb{R})$.

    The third isomorphism is obtained directly by using the first one.
\end{proof}
Theorem \ref{Diagonal basis in hypercomplex algebras : thm equivalence of multicomplex spaces} makes us notice a redundancy in the notation $\mathbb{M}_{a,b}(\mathbb{K})$, since there are only two distinct cases: $\mathbb{M}_{n,0}(\mathbb{R})$ and $\mathbb{M}_{n,0}(\mathbb{C})$, $n \in \mathbb{N}$, respectively known as the multiperplex and multicomplex number spaces \cite{courchesneAndTremblay2025}. This conclusion is consistent with recent results from H. Sharma and D. Shirikov \cite{SharmaCommutativeAnalogs}. Both spaces are related via a complexification operation and have the same dimension $2^n$ over their respective field:
$$ \mathbb{M}_{n,0}(\mathbb{R}) \rightarrow \mathbb{M}_{n,0}(\mathbb{C}) \simeq \mathbb{M}_{0,n}(\mathbb{C}). $$
\begin{definition}
    The multiperplex and multicomplex number spaces will be denoted and defined as
    \begin{enumerate}
        \item[$\bullet$] $\mathbb{D}_n := \mathbb{M}_{n,0}(\mathbb{R})$ (multiperplex),
        \item[$\bullet$] $\mathbb{M}_n := \mathbb{M}_{n,0}(\mathbb{C})$ (multicomplex).
    \end{enumerate}
\end{definition}

\subsection{Sylvester-Hadamard matrices}
\label{subsection : Sylvester-Hadamard matrices}

A Hadamard matrix is a square $m \times m$ matrix $H$ with entries in $\{ +1, -1 \}$ such that
\begin{equation}
    \label{def Hadamard matrices}
    H H^{\top} = m I_m
\end{equation}
where $I_m$ is the identity. Here we are mostly interested in the Sylvester construction of Hadamard matrices. Let $\langle p, q\rangle$ denotes the binary inner product of the positive integers $p$ and $q$ i.e. if $\mathrm{bin}_n(p) = (p_{n-1}, p_{n-2},\ldots,p_0)$ and $\mathrm{bin}_n(q) = (q_{n-1}, q_{n-2},\ldots,q_0)$ with $p_j, q_j \in \{ 0,1 \}$, then
$$ \langle p,q\rangle := \sum_{j=0}^{n-1} p_j \wedge q_j, $$
which is the bit count of the AND operation $p \wedge q$. The Sylvester-Hadamard $2^n \times 2^n$ square matrices are defined as
\begin{equation}
    \label{def Sylvester Hadamard matrices}
    H_n := [(-1)^{\langle p,q\rangle}]_{p,q=0}^{2^n-1}, \quad n \geq 1.
\end{equation}
We already know from \cite{horadam} that $H_n$ satisfy (\ref{def Hadamard matrices}) with $m=2^n$, it is also this book that inspired this present notation. By such, $H_n$ is invertible, and also commutative by definition ($\langle p, q\rangle = \langle q, p \rangle$):
$$ H_n^{-1} = \frac{H_n}{2^n}, \quad H_n^{\top} = H_n. $$
\begin{proposition}
    \label{bit scalar product property}
    Let $p,q \in \mathbb{N}$ such that $0 \leq p,q < 2^n$, then
    $$ \sum_{k=0}^{2^n-1} (-1)^{\langle p,k \rangle} (-1)^{\langle k,q \rangle} = 2^n \delta_{p,q} $$
    where $\delta_{p,q}$ is the Kronecker delta.
\end{proposition}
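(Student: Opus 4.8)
The plan is to recognize this statement as the row-orthogonality of the Sylvester--Hadamard matrix $H_n$ and to prove it by a direct factorization over bit positions. First I would merge the two signs into a single exponent, writing $(-1)^{\langle p,k\rangle}(-1)^{\langle k,q\rangle} = (-1)^{\langle p,k\rangle + \langle k,q\rangle}$. Since each $p_j, q_j, k_j \in \{0,1\}$, the logical AND coincides with the ordinary product, so $\langle p,k\rangle = \sum_{j=0}^{n-1} p_j k_j$ and likewise for $\langle k,q\rangle$. Adding them gives the exponent $\sum_{j=0}^{n-1} k_j (p_j + q_j)$, which splits the sign as a product $\prod_{j=0}^{n-1} (-1)^{k_j(p_j+q_j)}$.

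Next I would exploit the fact that ranging $k$ over $\{0,\ldots,2^n-1\}$ is exactly ranging its bit vector $(k_0,\ldots,k_{n-1})$ over $\{0,1\}^n$. By distributivity, the sum of the product factorizes into a product of independent one-bit sums:
$$ \sum_{k=0}^{2^n-1} \prod_{j=0}^{n-1} (-1)^{k_j(p_j+q_j)} = \prod_{j=0}^{n-1} \sum_{k_j=0}^{1} (-1)^{k_j(p_j+q_j)} = \prod_{j=0}^{n-1} \left( 1 + (-1)^{p_j+q_j} \right). $$
Each factor equals $2$ when $p_j = q_j$ and $0$ when $p_j \neq q_j$, so the whole product is $2^n$ precisely when all bits agree (that is, $p = q$) and $0$ otherwise, giving $2^n \delta_{p,q}$.

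I do not expect a genuine obstacle here; the only step requiring care is the factorization, i.e.\ justifying the interchange of the $k$-sum with the $j$-product, which is just the finite distributive law applied to binary tuples. As a shortcut, I could instead invoke the already-cited Hadamard property $H_n H_n^{\top} = 2^n I_{2^n}$ together with the symmetry $H_n^{\top} = H_n$: the left-hand side is exactly the $(p,q)$ entry of $H_n H_n^{\top}$, which the defining relation (\ref{def Hadamard matrices}) sets equal to $2^n \delta_{p,q}$. I would present the bitwise factorization as the main argument, since it is self-contained and makes the appearance of $2^n$ transparent, and mention the matrix identity as the one-line alternative.
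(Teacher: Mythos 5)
Your proof is correct and takes essentially the same approach as the paper: both reduce the exponent $\langle p,k\rangle + \langle k,q\rangle$ bitwise and exploit that the sum over $k \in \{0,\ldots,2^n-1\}$ splits across independent bits, with a one-bit sum vanishing at every position where $p_j \neq q_j$. The only difference is presentational --- the paper argues by cases ($p=q$ versus $p\neq q$, using a truth-table identity for $k_j \wedge p_j + k_j \wedge q_j$ and factoring out a single vanishing bit-sum), whereas you factorize uniformly into $\prod_{j=0}^{n-1}\bigl(1+(-1)^{p_j+q_j}\bigr)$, handling both cases at once; and your one-line fallback via $H_n H_n^{\top} = 2^n I_{2^n}$ is exactly the equivalence the paper itself acknowledges before choosing to prove the identity independently.
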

\begin{proof}
    This statement is equivalent to definition (\ref{def Sylvester Hadamard matrices}) and equation (\ref{def Hadamard matrices}). Here we prove it independently for another application case. From the definition,
    $$ \langle p,k \rangle + \langle k,q \rangle = \sum_{j=0}^{n-1} k_j \wedge p_j + \sum_{j=0}^{n-1} k_j \wedge q_j = \sum_{j=0}^{n-1} (k_j \wedge p_j + k_j \wedge q_j) $$
    and it can be shown using a truth table that
    $$ \alpha_j := k_j \wedge p_j + k_j \wedge q_j = k_j \wedge (p_j \vee q_j) + k_j \wedge p_j \wedge q_j. $$
    From this equality, we can see that
    $$ \alpha_j = \begin{cases}
        0 & \text{if } p_j = q_j = 0,\\
        2 k_j & \text{if } p_j = q_j = 1,\\
        k_j & \text{if } p_j \neq q_j.
    \end{cases} $$
    If $p=q$, 
    $$ \sum_{k=0}^{2^n-1} (-1)^{\langle p,k \rangle} (-1)^{\langle k,p \rangle} = \sum_{k=0}^{2^n-1} (-1)^{2 \langle k,p \rangle} = 2^n. $$
    If $p \neq q$, we denote by $j_1, \ldots, j_m$ all the indices such that $p_j \neq q_j$. Note that $\alpha_j$ is even for all $p_j = q_j$, and $\alpha_j = k_j$ when $p_j \neq q_j$, so
    $$ \sum_{k=0}^{2^n-1} (-1)^{\sum_j \alpha_j} = \sum_{k=0}^{2^n-1} (-1)^{k_{j_1}} \ldots (-1)^{k_{j_m}}. $$
    The sum over $k$ can be written as the sum of all possible bit values $k_j = 0,1$:
    $$ \begin{aligned}
        \sum_{k=0}^{2^n-1} (-1)^{k_{j_1}} \ldots (-1)^{k_{j_m}} &= \sum_{k_0 = 0}^1 \ldots \sum_{k_{n-1} = 0}^1 (-1)^{k_{j_1}} \ldots (-1)^{k_{j_m}}\\
        &= C \sum_{k_{j_1} = 0}^1 (-1)^{k_{j_1}}\\
        &= C (1 - 1) = 0.
    \end{aligned} $$
    So at least one different bit $p_j \neq q_j$, which always exists when $p \neq q$, implies
    $$ \sum_{k=0}^{2^n-1} (-1)^{\langle p,k \rangle} (-1)^{\langle k,q \rangle} = 0. $$
\end{proof}
\begin{proposition}
    \label{hadamard-solution matrix def and properties}
    Let $T_n := [\nu_q (-1)^{\langle p,q\rangle}]_{p,q=0}^{2^n-1}$ where $\nu_q \in \mathbb{C}$ and $|\nu_q| = 1$. Then
    $$ T_n^{\dagger} T_n = 2^n I_{2^n}. $$
\end{proposition}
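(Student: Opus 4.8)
The plan is to verify the identity $T_n^{\dagger} T_n = 2^n I_{2^n}$ by computing a generic entry of the Hermitian product and recognizing the resulting sum as a direct instance of Proposition~\ref{bit scalar product property}. Since $T_n$ is the $2^n \times 2^n$ matrix with $(p,q)$ entry equal to $\nu_q (-1)^{\langle p,q\rangle}$, the adjoint $T_n^{\dagger}$ has $(q,p)$ entry $\overline{(T_n)_{p,q}} = \overline{\nu_q}\,(-1)^{\langle p,q\rangle}$, where I use that $(-1)^{\langle p,q\rangle} \in \{+1,-1\}$ is real and hence unaffected by complex conjugation. The only quantity carrying a nontrivial conjugate is therefore the phase $\nu_q$.

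The key step is to write out the $(q,q')$ entry of the product. Summing over the shared index $p$ gives
$$
(T_n^{\dagger} T_n)_{q,q'} = \sum_{p=0}^{2^n-1} \overline{\nu_q}\,(-1)^{\langle p,q\rangle}\, \nu_{q'}\,(-1)^{\langle p,q'\rangle} = \overline{\nu_q}\,\nu_{q'} \sum_{p=0}^{2^n-1} (-1)^{\langle p,q\rangle} (-1)^{\langle p,q'\rangle},
$$
where the phase factors pull out of the sum because they do not depend on the summation index $p$. I would then invoke the symmetry of the binary inner product, $\langle p,q\rangle = \langle q,p\rangle$, so that the remaining sum matches the left-hand side of Proposition~\ref{bit scalar product property} and evaluates to $2^n \delta_{q,q'}$.

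To finish, substitute this value to obtain $(T_n^{\dagger} T_n)_{q,q'} = 2^n\, \overline{\nu_q}\,\nu_{q'}\, \delta_{q,q'}$. The off-diagonal entries ($q \neq q'$) vanish because of the Kronecker delta, and on the diagonal ($q = q'$) the phase collapses to $\overline{\nu_q}\,\nu_q = |\nu_q|^2 = 1$ by the hypothesis $|\nu_q| = 1$, leaving exactly $2^n$. Hence $T_n^{\dagger} T_n = 2^n I_{2^n}$. I do not anticipate any real obstacle here, as the argument is a short entry-wise calculation resting entirely on the already-established orthogonality relation; the only point requiring minor care is correctly tracking the conjugate and using the symmetry $\langle p,q\rangle = \langle q,p\rangle$ to align the double sum with the exact form of Proposition~\ref{bit scalar product property}.
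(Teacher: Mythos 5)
Your proof is correct and follows essentially the same route as the paper: an entry-wise computation of the Hermitian product, factoring the unimodular phases out of the sum, and invoking Proposition~\ref{bit scalar product property} (together with the symmetry $\langle p,q\rangle = \langle q,p\rangle$, which the paper uses implicitly) to reduce everything to $2^n\delta_{q,q'}$ and then $|\nu_q|^2 = 1$ on the diagonal. No gaps; your explicit tracking of the conjugated entry of $T_n^{\dagger}$ is, if anything, slightly more careful than the paper's.
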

\begin{proof}
    The equality $T_n^{\dagger} T_n = 2^n I_{2^n}$ is equivalent to
    $$ (T_n^{\dagger} T_n)_{p,q} = 2^n \delta_{p,q}. $$
    Expanding the matrix product,
    $$ \begin{aligned}
        (T_n^{\dagger} T_n)_{p,q} &= \sum_{k=0}^{2^n-1} (T_n^{\dagger})_{p,k} (T_n)_{k,q}\\
        &= \sum_{k=0}^{2^n-1} \nu_p^* \nu_q (-1)^{\langle p,k\rangle} (-1)^{\langle k,q\rangle}\\
        &= \nu_p^* \nu_q \sum_{k=0}^{2^n-1} (-1)^{\langle p,k\rangle} (-1)^{\langle k,q\rangle}\\
        &= |\nu_p|^2 2^n \delta_{p,q} \quad \text{(proposition \ref{bit scalar product property})}\\
        &= 2^n \delta_{p,q}.
    \end{aligned} $$
\end{proof}
\begin{theorem}
    \label{theorem : algorithm-like diagonal basis}
    Let $s$ be the multiplier function of a commutative hypercomplex algebra $\mathcal{A}$ with $n \geq 0$ principal units $\{1,u_1, \ldots,u_n\}$ such that $u_i^2 \neq 0, \forall i$. Then there is a multiplicative diagonal basis given by the change-of-coordinates matrix
    $$ T_n := [\nu_q (-1)^{\langle p,q\rangle}]_{p,q=0}^{2^n-1}, $$
    from the standard basis to the new basis, with $\nu_q^2 = s(q,q)$.
\end{theorem}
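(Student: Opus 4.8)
The plan is to exhibit the basis directly through $T_n$ and verify that it is diagonal by checking a single structural identity bit by bit. Since $\mathcal{A}$ is commutative ($\lambda=+1$, Proposition~\ref{XOR algebra : commutation property lambda values}) and $u_i^2\neq0$, Definition~\ref{XOR algebra : hypercomplex algebra def} forces $\sigma_k := u_{k+1}^2 \in\{-1,+1\}$; then Theorem~\ref{XOR algebra : XOR thm} gives the index function $r(p,q)=p\oplus q$ and Proposition~\ref{XOR algebra : values of s} gives $s(p,q)\in\{-1,+1\}$. Reading off the multiplier from the XOR-theorem computation yields the explicit product form $s(p,q)=\prod_{k=0}^{n-1}\sigma_k^{\,p_k\wedge q_k}$, and in particular $s(q,q)=\prod_k\sigma_k^{\,q_k}$. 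I would then note that the whole assertion reduces to verifying the relation of Theorem~\ref{multiplicative diagonal basis : basic thm}, namely $s(p,q)\,(T_n)_{\alpha,p\oplus q}=(T_n)_{\alpha,p}\,(T_n)_{\alpha,q}$ for all $\alpha,p,q$, for the proposed entries $(T_n)_{\alpha,q}=\nu_q(-1)^{\langle\alpha,q\rangle}$.

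The computation splits into two elementary bit identities. First, the Sylvester sign is bilinear modulo $2$: from $p_j\oplus q_j=p_j+q_j-2p_jq_j$ one gets $\alpha_j\wedge(p_j\oplus q_j)\equiv \alpha_j\wedge p_j+\alpha_j\wedge q_j\pmod 2$, hence $\langle\alpha,p\oplus q\rangle\equiv\langle\alpha,p\rangle+\langle\alpha,q\rangle\pmod 2$ and
$$ (-1)^{\langle\alpha,p\oplus q\rangle}=(-1)^{\langle\alpha,p\rangle}(-1)^{\langle\alpha,q\rangle}. $$
Substituting this into the target relation cancels every factor carrying the index $\alpha$, so the relation becomes the $\alpha$-free statement $s(p,q)\,\nu_{p\oplus q}=\nu_p\,\nu_q$, which is exactly what must hold for the given $\nu_q$.

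To meet both $\nu_q^2=s(q,q)$ and $s(p,q)\nu_{p\oplus q}=\nu_p\nu_q$ simultaneously, I would build $\nu_q$ multiplicatively: choose for each unit a scalar $\tau_k$ with $\tau_k^2=\sigma_k$ (so $\tau_k=1$ when $\sigma_k=+1$ and $\tau_k=i$ when $\sigma_k=-1$, which is where the complex field enters in the multicomplex case), and set $\nu_q:=\prod_{k=0}^{n-1}\tau_k^{\,q_k}$. Then $\nu_q^2=\prod_k\sigma_k^{\,q_k}=s(q,q)$ by construction, matching the statement. The multiplicative relation is then checked factor by factor using the integer identity $2p_kq_k+(p_k\oplus q_k)=p_k+q_k$: the $k$-th factor of $s(p,q)\nu_{p\oplus q}$ is $\sigma_k^{\,p_kq_k}\tau_k^{\,p_k\oplus q_k}=\tau_k^{\,2p_kq_k+(p_k\oplus q_k)}=\tau_k^{\,p_k+q_k}$, which is precisely the $k$-th factor of $\nu_p\nu_q$.

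Two things remain. For invertibility, since $s(q,q)\in\{-1,+1\}$ each root has $|\nu_q|=1$, so Proposition~\ref{hadamard-solution matrix def and properties} applies verbatim and gives $T_n^\dagger T_n=2^nI_{2^n}$; thus $T_n$ is a genuine change-of-coordinates matrix. For sufficiency (that the verified relation actually produces a diagonal basis, i.e.\ the converse direction not covered by Theorem~\ref{multiplicative diagonal basis : basic thm}), I would observe that the relation $s(p,q)(T_n)_{\alpha,r(p,q)}=(T_n)_{\alpha,p}(T_n)_{\alpha,q}$ says precisely that each row defines an algebra homomorphism $\chi_\alpha:\mathcal{A}\to\mathbb{K}$, $\chi_\alpha(\mathrm e_p)=(T_n)_{\alpha,p}$; the $2^n$ such homomorphisms are independent because $T_n$ is invertible, so the dual basis they determine (equivalently, the basis $\tilde e_\alpha$ obtained by inverting $T_n$) is a complete set of orthogonal idempotents, $\tilde e_\alpha*\tilde e_\beta=\delta_{\alpha\beta}\tilde e_\alpha$, which is Definition~\ref{multiplicative diagonal basis : diagonal basis def}. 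I expect the main obstacle to be organizational rather than conceptual: aligning the row/column (new-index/old-index) convention of $T_n=[\nu_q(-1)^{\langle p,q\rangle}]$ with that of Theorem~\ref{multiplicative diagonal basis : basic thm}, and marshaling the two bit identities so that the sign factors and the $\tau_k$-exponents cancel cleanly; once these are set up, the verification is a direct factor-by-factor computation.
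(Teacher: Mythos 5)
Your proposal is correct, but it takes a genuinely different route from the paper's. The paper proves Theorem~\ref{theorem : algorithm-like diagonal basis} by induction on the number of principal units: assuming a diagonal basis $\{\hat{\mathrm e}_p\}$ exists for $n$ units, it writes $T_{n+1}$ as the block matrix $\left[\begin{smallmatrix} T_n & \nu_{2^n}T_n\\ T_n & -\nu_{2^n}T_n \end{smallmatrix}\right]$, computes the new basis elements explicitly as $\tilde{\mathrm e}_p = \hat{\mathrm e}_p\,(1+\nu^*_{2^n}u_{n+1})/2$ and $\tilde{\mathrm e}_{p+2^n} = \hat{\mathrm e}_p\,(1-\nu^*_{2^n}u_{n+1})/2$, and concludes diagonality from the idempotency and mutual annihilation of $(1\pm\nu^*_{2^n}u_{n+1})/2$. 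You instead verify everything in closed form, non-inductively: the explicit multiplier formula $s(p,q)=\prod_k\sigma_k^{p_k\wedge q_k}$, the mod-$2$ bilinearity $\langle\alpha,p\oplus q\rangle\equiv\langle\alpha,p\rangle+\langle\alpha,q\rangle \pmod 2$ which cancels all $\alpha$-dependence, and the resulting $\alpha$-free identity $s(p,q)\nu_{p\oplus q}=\nu_p\nu_q$, checked factor by factor for the multiplicative choice $\nu_q=\prod_k\tau_k^{q_k}$, $\tau_k^2=\sigma_k$. Importantly, you correctly identified and filled a logical point the paper handles differently: Theorem~\ref{multiplicative diagonal basis : basic thm} is only a necessary condition, and your converse argument — the relation $s(p,q)(T_n)_{\alpha,p\oplus q}=(T_n)_{\alpha,p}(T_n)_{\alpha,q}$ makes each row an algebra homomorphism $\chi_\alpha$ with $\chi_\gamma(\tilde{\mathrm e}_\alpha)=(T_nT_n^{-1})_{\gamma\alpha}=\delta_{\gamma\alpha}$, whence $\tilde{\mathrm e}_\alpha*\tilde{\mathrm e}_\beta=\sum_\gamma\chi_\gamma(\tilde{\mathrm e}_\alpha)\chi_\gamma(\tilde{\mathrm e}_\beta)\tilde{\mathrm e}_\gamma=\delta_{\alpha\beta}\tilde{\mathrm e}_\alpha$ — is sound, whereas the paper sidesteps the converse by computing the products of the constructed basis elements directly. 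As for what each buys: the paper's induction produces the factored idempotent form of the diagonal elements (the $\Gamma_n=\gamma_1\cdots\gamma_n$ and its conjugates), which is reused in the conjugate-algebra section, while your character-style verification is shorter, avoids the block-matrix bookkeeping, and makes transparent a point the theorem's statement leaves implicit: an \emph{arbitrary} family of square roots with $\nu_q^2=s(q,q)$ does not suffice (flipping the sign of a single $\nu_{q_0}$ already violates $s(p,q)\nu_{p\oplus q}=\nu_p\nu_q$, so by the necessity direction of Theorem~\ref{multiplicative diagonal basis : basic thm} the resulting basis is not diagonal) — the coherent multiplicative choice you impose is exactly what the paper's induction encodes in its relation $\nu_{q+2^n}=\nu_{2^n}\nu_q$.
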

\begin{proof}
    Let's denote by $\mathrm e_p, p=0,\ldots,2^n-1$ the elements of the standard basis as defined in equation (\ref{XOR algebra : binary representation}). From proposition \ref{XOR algebra : values of s}, $s(p,q) \in \{ -1, +1 \}$ since $u_k^2 \in \{ -1, +1 \}$ which implies that we can choose $\nu_q \in \{ 1,i \}$ with $\nu_q^2 = 1^2 = 1$ for the cases where $s(q,q) = 1$, and $\nu_q^2 = i^2 = -1$ for the cases where $s(q,q)=-1$. 
    
    
    We also denote by $\tilde{\mathrm{e}}_p, p=0,\ldots,2^n-1$ the elements of the new basis using the change-of-coordinates matrix $T_n$:
    $$ \tilde{\mathrm e}_k := \sum_{\alpha=0}^{2^n-1} (T_n)^{-1}_{\alpha,k} \mathrm e_\alpha. $$
    Since $|\nu_q| = 1$, we know from proposition \ref{hadamard-solution matrix def and properties} that the inverse of $T_n$ exists and is given by
    $$ (T_n)^{-1} = \frac{T_n^{\dagger}}{2^n}. $$
    The multiplier $\tilde{s}$ and the index function $\tilde{r}$ of the new basis are defined such that
    $$ \tilde{\mathrm e}_p \tilde{\mathrm e}_q = \tilde s(p,q) \tilde{\mathrm e}_{\tilde r(p,q)} $$
    and this basis is diagonal if $\tilde{s}(p,q) = \delta_{p,q}$ and $\tilde{r}(p,q) = p \delta_{p,q}$. We then proceed by induction, noting that the diagonal basis trivially exists when $n=0$. Suppose that the diagonal basis $\{ \hat{\mathrm e}_p \}_{p=0}^{2^n-1}$ exists for an algebra with $n$ principal units, obtained with the matrix $T_n$, and we add the unit $u_{n+1}$ such that $u_{n+1}^2 \in \{ -1, +1 \}$. Then for $p,q < 2^n$,
    \begin{equation}
        \label{multiperplex and multicomplex diagonal basis theorem : elements e}
        \mathrm e_{p+2^n} = \prod_{k=0}^{n} u_{k+1}^{(p+2^n)_k} = \prod_{k=0}^{n-1} u_{k+1}^{p_k} \cdot u_{n+1} = u_{n+1} \mathrm e_p.
    \end{equation}
    By definition of the bitwise scalar product,
    \begin{equation}
        \label{multiperplex and multicomplex diagonal basis theorem : bitwise scalar product}
        \begin{aligned}
            & \langle p + 2^n,q\rangle = \langle p,q + 2^n\rangle =  \langle p,q\rangle,\\
            & \langle p + 2^n,q + 2^n\rangle = \langle p,q\rangle + 1.
        \end{aligned}
    \end{equation}
    And by definition of $\nu_q$,
    $$ s(q+2^n, q+2^n) = \mathrm e_{q+2^n} \mathrm e_{q+2^n} = u_{n+1}^2 \mathrm e_q \mathrm e_q = s(2^n, 2^n) s(q,q) $$
    \begin{equation}
        \label{multiperplex and multicomplex diagonal basis theorem : elements nu}
        \Rightarrow \nu_{q+2^n} = \sqrt{s(q+2^n, q+2^n)} = \sqrt{s(2^n, 2^n) s(q,q)} = \nu_{2^n} \nu_q.
    \end{equation}
    From these last equations, we get
    $$
        \begin{aligned}
            (T_{n+1})_{p,q} &= \nu_q (-1)^{\langle p,q\rangle} = (T_n)_{p,q},\\
            (T_{n+1})_{p+2^n,q} &= \nu_q (-1)^{\langle p+2^n,q\rangle} = \nu_q (-1)^{\langle p,q\rangle} = (T_n)_{p,q},\\
            (T_{n+1})_{p,q+2^n} &= \nu_{q+2^n} (-1)^{\langle p,q\rangle} = \nu_{2^n}\nu_q (-1)^{\langle p,q\rangle} = \nu_{2^n}(T_n)_{p,q},\\
            (T_{n+1})_{p+2^n,q+2^n} &= \nu_{q+2^n} (-1)^{\langle p+2^n,q+2^n\rangle} = -\nu_{2^n}\nu_q (-1)^{\langle p,q\rangle} = -\nu_{2^n}(T_n)_{p,q},
        \end{aligned}
    $$
    or in the form of a block matrix
    \begin{equation}
        T_{n+1} = \begin{bmatrix}
            T_n & \nu_{2^n} T_n\\
            T_n & - \nu_{2^n} T_n
        \end{bmatrix} \Rightarrow (T_{n+1})^{-1} = \frac{1}{2} \begin{bmatrix}
            (T_n)^{-1} & (T_n)^{-1}\\
            \nu^*_{2^n} (T_n)^{-1} & - \nu^*_{2^n} (T_n)^{-1}
        \end{bmatrix}.
    \end{equation}
    The elements of the new basis are given, for $p<2^n$, by
    $$ \begin{aligned}
        \tilde{\mathrm e}_p &= \sum_{\alpha=0}^{2^{n+1}-1} (T_{n+1})^{-1}_{\alpha,p} \mathrm e_\alpha\\
        &= \sum_{\alpha=0}^{2^n-1} \left[ (T_{n+1})^{-1}_{\alpha,p} \mathrm e_\alpha+ (T_{n+1})^{-1}_{\alpha+2^n,p} \mathrm e_{\alpha+2^n} \right]\\
        &= \frac{1}{2} \sum_{\alpha=0}^{2^n-1} \left[ (T_{n})^{-1}_{\alpha,p} \mathrm e_\alpha+ \nu_{2^n}^* (T_{n})^{-1}_{\alpha,p} u_{n+1}\mathrm e_{\alpha} \right]\\
        &= \hat{\mathrm e}_p \frac{(1 + \nu^*_{2^n} u_{n+1})}{2}.
    \end{aligned} $$
    and in the same way,
    $$ \tilde{\mathrm e}_{p + 2^n} = \hat{\mathrm e}_p \frac{(1 - \nu^*_{2^n} u_{n+1})}{2}. $$
        From the following properties,
        $$ \left( \frac{1 + \nu^*_{2^n} u_{n+1}}{2} \right)^2 = \frac{(1 + \nu^*_{2^n} u_{n+1})}{2}, $$
        $$ \left( \frac{1 - \nu^*_{2^n} u_{n+1}}{2} \right)^2 = \frac{(1 - \nu^*_{2^n} u_{n+1})}{2}, $$
        $$ \frac{(1 + \nu^*_{2^n} u_{n+1})}{2} \cdot \frac{(1 - \nu^*_{2^n} u_{n+1})}{2} = 0, $$
        and the fact that $\{ \hat{\mathrm e}_p \}_{p=0}^{2^n-1}$ is a diagonal basis, it is clear that $\{ \tilde{\mathrm e}_k \}_{k=0}^{2^{n+1}-1}$ is a diagonal basis of the new algebra obtained by adding the unit $u_{n+1}$.
\end{proof}
\begin{remark}
    Note that since the only commutative hypercomplex algebras with nonvanishing principal units are the multiperplex $\mathbb{D}_n$ and multicomplex $\mathbb{M}_n$ number spaces defined above to have only hyperbolic units, a simpler proof would have been possible by setting $\nu_q = 1$ everywhere and using directly the Sylvester-Hadamard matrices. We have performed the proof here for the general case with the aim of obtaining an expression for the change-of-coordinates matrix even when defining the algebra using imaginary units.
\end{remark}

\sloppy\section{Conjugate algebra}

Let $\mathcal{A}$ be a commutative hypercomplex algebra as defined in sections \ref{section : Commutation property and XOR index function} and \ref{Diagonal basis in hypercomplex algebras} with a set of principal units $\{1,u_1,\ldots,u_n\}$, $u_i^2 \neq 0, \forall i$. We also assume that the power $x^a$ is defined for all invertible elements $x \in \mathcal{A}$ with $a \in \mathbb{R}$, and non-invertible elements $x \in \mathcal{A}$ with $a \geq 0$. From the commutative property of the algebra, $x^{a+b} = x^ax^b = x^bx^a$ for all $a,b \in \mathbb{R}$ and $x \in \mathcal{A}$ when this expression makes sense.
We define the conjugate operation $\dagger_k$ on $\mathcal{A}$ such that
$$ \dagger_0 := \mathrm{id} \quad \text{and} \quad \dagger_k : u_k \rightarrow -u_k, \quad k=1,\ldots,n $$
and their compositions as
$$ x^{\dagger_j \circ \dagger_k} := (x^{\dagger_j})^{\dagger_k}, \quad 0 \leq j,k \leq n, \quad x \in \mathcal{A}. $$
We already know that the set $\ddagger$ of all conjugates with the composition operation is a commutative group of order $2^n$ where each element is its own inverse \cite{Garant-Pelletier}. For the following, we enumerate elements of $\ddagger$ using the binary representation for composition of conjugates, in the same way we did for basis elements in expression (\ref{XOR algebra : binary representation}),
\begin{equation}
    \label{Conjugate hypercomplex algebra : binary representation}
    \ddagger_p := \prod_{k=0}^{n-1} \dagger_{k+1}^{p_k}, \quad \dagger_j^0 := \dagger_0 = \mathrm{id}, \quad j = 1,\ldots,n;
\end{equation}
where $\mathrm{bin}_n(p) = (p_{n-1},p_{n-2},\ldots,p_0)$. In this way, the same kind of analysis done in previous sections to extract a multiplier and index functions can be performed. 
\begin{remark}
    When $\mathcal{A}$ is a multicomplex algebra, the field is the complex numbers $\mathbb{C}$ and the usual complex conjugate is not accounted for in $\ddagger$ (since the imaginary unit $i$ is not part of the principal units set $\{1,u_1,\ldots,u_n\}$). In this case we will simply set $\ddagger_* := \ddagger \cup \dagger_* \circ \ddagger$ where $\dagger_*$ is another notation for the usual complex conjugate $z^{\dagger_*} := z^* \equiv \overline{z}, z \in \mathbb{C}$. We have defined $\dagger_* \circ \ddagger := \{ \dagger_* \circ \ddagger_p \mid \ddagger_p \in \ddagger \}$.
\end{remark}
\begin{definition}
    \label{definition : conjugates hypercomplex algebra}
    Let $x \in \mathcal{A}$ and let $+$ be the addition operation between compositions of conjugates $\ddagger_p, \ddagger_q \in \ddagger$ such that
    $$ x^{a\ddagger_p + \ddagger_q} := (x^a)^{\ddagger_p} x^{\ddagger_q}, \quad \forall a \in \mathbb{R} $$
    when the expression $x^a$ makes sense. Then we call the resulting structure $(\ddagger, +, \circ)_x$ the conjugate hypercomplex system on $x \in \mathcal{A}$. When $\mathcal{A}$ has complex coefficients, we extend this definition to $(\ddagger_*, +, \circ)$.
\end{definition}


It may seem somewhat artificial to introduce the subject in this way, but let us see how it arises in a more familiar context. For the complex numbers, we have two principal units $\{\dagger_0, \dagger_*\}$ where $\dagger_0$ is the identity and $\dagger_*$ is the usual complex conjugate. We will see below that the conjugate algebra can be identified as a multiperplex algebra, and by such a diagonal basis is directly obtained from the previous results (section \ref{subsection : Sylvester-Hadamard matrices}) :
$$ \tilde \dagger_0 = \frac{1}{2}(\dagger_0 + \dagger_*), \quad \tilde \dagger_1 = \frac{1}{2}(\dagger_0 - \dagger_*), \quad \tilde \dagger_0 \circ \tilde\dagger_1 = 0, \quad \tilde \dagger_i^2 = \tilde \dagger_i, \quad i = 1,2. $$
A property of the diagonal bases for multiperplex and multicomplex algebras is that the sum of their elements always equal to the identity \cite{courchesneAndTremblay2025}. Here we see that $\tilde \dagger_0 + \tilde \dagger_1 = \dagger_0$, and since $z^{\dagger_0} = z$ for all $z \in \mathbb{C}$, we should get
$$ z = z^{\dagger_0} = z^{\tilde\dagger_0 + \tilde \dagger_1} = z^{\tilde \dagger_0} z^{\tilde \dagger_1}, \quad \forall z \in \mathbb{C}. $$
Let's verify this claim by applying definiton \ref{definition : conjugates hypercomplex algebra}. Let $z = x + iy$, $x,y \in \mathbb{R}$, then
$$ z^{\tilde \dagger_0} = z^{\frac{1}{2}(\dagger_0 + \dagger_*)} = (z^{\frac{1}{2}})^{\dagger_0}(z^{\frac{1}{2}})^{\dagger_*} = z^{\frac{1}{2}}\overline{z}^{\frac{1}{2}} = |z| $$
and
$$ z^{\tilde\dagger_1} = z^{\frac{1}{2}(\dagger_0 - \dagger_*)} = (z^{\frac{1}{2}})^{\dagger_0}(z^{-\frac{1}{2}})^{\dagger_*} = \left(\frac{z}{\overline{z}}\right)^{\frac{1}{2}} = \frac{x + iy}{|z|} = e^{i\theta} $$
where $\theta := \arctan(y/x)$. This means that $z=z^{\tilde\dagger_0 + \tilde\dagger_1}$ is not only a true statement, but also another way to express the trigonometric representation of complex numbers.

\subsection{Conjugates on the multiperplex algebras}

For simplicity, we restrict our discussion to the case of a conjugate algebra $(\ddagger, +, \circ)$ defined over the multiperplex algebra $\mathbb{D}_n$. Since $\mathbb{D}_n$ is related to the multicomplex algebra $\mathbb{M}_n$ through a complexification, the same analysis can be extended directly to the conjugates $(\ddagger_*, +,\circ)$ of $\mathbb{M}_n$.

As a reminder, $\mathbb{D}_n$ has a set $\{1, \mathrm{j}_1, \mathrm{j}_2, \ldots, \mathrm{j}_n\}$ of principal units with $n \geq 0$ and $\mathrm{j}_k^2 = 1$. The standard basis $\{ \mathrm{e}_p \}$ has $2^n$ elements, also called the composite units, that can be written using the binary representation (\ref{XOR algebra : binary representation}):
$$ \mathrm{e}_p := \prod_{k=0}^{n-1} \mathrm{j}_{k+1}^{p_k}, \quad \mathrm{bin}_n(p) = (p_{n-1}, \ldots, p_0) \in \{ 0,1 \}^n. $$
From both theorem \ref{theorem : algorithm-like diagonal basis} and \cite{courchesneAndTremblay2025}, we know that the diagonal basis elements are of the form
$$ \Gamma_n^{\ddagger_p} := (\gamma_1 \ldots \gamma_n)^{\ddagger_p}, \quad \ddagger_p \in \ddagger, \quad \text{where} \quad \gamma_i := \frac{1}{2}(1 + \mathrm{j}_i). $$
We denote the diagonal basis $\mathcal{E}$ and its elements $\varepsilon_p := \Gamma_n^{\ddagger_p}$, giving the canonical idempotent representation for all $x \in \mathbb{D}_n$: 
$$ x = \sum_{p=0}^{2^n-1} x_{\hat p} \varepsilon_p, \quad x_{\hat p} \in \mathbb{R} $$
with the properties
$$ \varepsilon_p \varepsilon_q = \delta_{pq} \quad \text{and} \quad \sum_{p=1}^{2^n} \varepsilon_p = 1. $$
Note that a given conjugate $\dagger_k$ acts only on $\gamma_k$ (with the same index), so
\begin{equation}
    \label{equation : binary representation for diagonal}
    \varepsilon_p = (\gamma_1 \ldots \gamma_n)^{\ddagger_p} = \prod_{k=0}^{n-1} \gamma_{k+1}^{\dagger^{p_k}}
\end{equation}
where we used a simplified notation $\gamma_i^{\dagger} := \frac{1}{2}(1 - \mathrm{j}_i)$ with $\dagger^0 := \dagger_0$. This is a variation of the binary representation now appearing in the expression of diagonal basis elements, and useful to prove many properties.
\begin{proposition}
    \label{proposition : diagonal basis and conjuagate XOR}
    Let $\varepsilon_p \in \mathcal{E}$ and $\ddagger_q \in \ddagger$, then $\varepsilon_p^{\ddagger_q} = \Gamma_n^{\ddagger_{p \oplus q}} = \varepsilon_{p \oplus q}.$
\end{proposition}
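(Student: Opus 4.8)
The plan is to collapse the entire statement onto a single group-composition identity in $\ddagger$. Starting from the definition $\varepsilon_p = \Gamma_n^{\ddagger_p}$ and the composition law $x^{\ddagger_p \circ \ddagger_q} := (x^{\ddagger_p})^{\ddagger_q}$, I would immediately write
$$ \varepsilon_p^{\ddagger_q} = \left( \Gamma_n^{\ddagger_p} \right)^{\ddagger_q} = \Gamma_n^{\ddagger_p \circ \ddagger_q}. $$
The whole proposition then reduces to establishing $\ddagger_p \circ \ddagger_q = \ddagger_{p \oplus q}$, after which the claimed chain $\varepsilon_p^{\ddagger_q} = \Gamma_n^{\ddagger_{p\oplus q}} = \varepsilon_{p\oplus q}$ follows purely by the definition of the diagonal elements. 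This is a useful reduction because it turns an algebra-valued identity into a statement about the commutative group $(\ddagger,\circ)$.

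To prove the composition rule I would exploit that $(\ddagger,\circ)$ carries exactly the XOR structure already analysed in Theorem~\ref{XOR algebra : XOR thm}. Each principal conjugate is an involution, $\dagger_{k+1}^2 = \mathrm{id}$, and the conjugates commute, so expanding both factors through the binary representation~(\ref{Conjugate hypercomplex algebra : binary representation}) gives
$$ \ddagger_p \circ \ddagger_q = \prod_{k=0}^{n-1} \dagger_{k+1}^{p_k} \circ \prod_{k=0}^{n-1} \dagger_{k+1}^{q_k} = \prod_{k=0}^{n-1} \dagger_{k+1}^{p_k + q_k}. $$
Since $\dagger_{k+1}^2 = \mathrm{id}$, the exponent reduces modulo $2$, i.e.\ $\dagger_{k+1}^{p_k+q_k} = \dagger_{k+1}^{p_k \oplus q_k}$, yielding $\prod_k \dagger_{k+1}^{p_k \oplus q_k} = \ddagger_{p\oplus q}$. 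This is the precise analogue of the four-case table~(\ref{XOR algebra : XOR thm table}), with the factor $\beta(k)$ trivialised to the identity because we work inside a group rather than an algebra with nontrivial multipliers. Alternatively, and perhaps more transparently, I could argue directly from the factored form~(\ref{equation : binary representation for diagonal}): since a given conjugate $\dagger_{k+1}$ acts only on the factor $\gamma_{k+1}$ of the same index, applying $\ddagger_q$ to $\varepsilon_p = \prod_k \gamma_{k+1}^{\dagger^{p_k}}$ replaces each factor by $\gamma_{k+1}^{\dagger^{p_k} \circ \dagger^{q_k}}$, and the elementary check $0{+}0\mapsto 0$, $1{+}0\mapsto 1$, $0{+}1\mapsto 1$, $1{+}1\mapsto 0$ shows this equals $\gamma_{k+1}^{\dagger^{p_k \oplus q_k}}$ termwise, recovering $\varepsilon_{p\oplus q}$.

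The only point requiring genuine care is the modular reduction of the exponents: one must verify that $\dagger_{k+1}^{p_k}$ and $\dagger_{k+1}^{q_k}$ compose without cross terms between different indices, which is guaranteed exactly by the commutativity of $\ddagger$ and the fact that each conjugate is its own inverse, both recalled in the text preceding the proposition. Beyond this bookkeeping there is no real obstacle; once the group identity $\ddagger_p \circ \ddagger_q = \ddagger_{p\oplus q}$ is secured, the proposition is immediate. I would therefore expect the proof to be short, its conceptual content being simply the recognition that the conjugate group is another incarnation of the same XOR pattern governing the standard basis.
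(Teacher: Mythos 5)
Your proposal is correct, and your ``alternative'' argument is in fact precisely the paper's proof: the paper applies $\ddagger_q$ factorwise to the representation $\varepsilon_p = \prod_{k} \gamma_{k+1}^{\dagger^{p_k}}$ of equation~(\ref{equation : binary representation for diagonal}), using the observation that $\dagger_{k+1}$ acts only on the like-indexed factor $\gamma_{k+1}$, and reduces $\dagger^{p_k}\circ\dagger^{q_k}$ to $\dagger^{p_k\oplus q_k}$ exactly as you describe. Your primary route is a light but genuine reorganization: instead of touching the idempotents at all, you first establish the group identity $\ddagger_p \circ \ddagger_q = \ddagger_{p\oplus q}$ in $(\ddagger,\circ)$ from the binary representation~(\ref{Conjugate hypercomplex algebra : binary representation}), commutativity, and involutivity, and then conclude via $\varepsilon_p^{\ddagger_q} = \bigl(\Gamma_n^{\ddagger_p}\bigr)^{\ddagger_q} = \Gamma_n^{\ddagger_p\circ\ddagger_q}$, which requires only the definition $\varepsilon_p := \Gamma_n^{\ddagger_p}$ and the iterated composition law. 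What this buys you is twofold: it makes the intermediate equality $\varepsilon_p^{\ddagger_q} = \Gamma_n^{\ddagger_{p\oplus q}}$ in the statement appear explicitly rather than implicitly, and it isolates a reusable lemma --- the conjugate group is another instance of the XOR pattern of Theorem~\ref{XOR algebra : XOR thm} with all multipliers trivial --- which would serve equally well elsewhere (e.g.\ in the first property of Proposition~\ref{proposition : basic properties conjugates}). The paper's factorwise computation is marginally more self-contained, since it never needs to formalize composition of arbitrary elements of $\ddagger$, but both rest on the same bitwise bookkeeping and both are complete.
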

\begin{proof}
    We get directly from (\ref{Conjugate hypercomplex algebra : binary representation}) and (\ref{equation : binary representation for diagonal})
    $$ \varepsilon_p^{\ddagger_q} = \left( \prod_{k=0}^{n-1} \gamma_{k+1}^{\dagger^{p_k}} \right)^{\ddagger_q} = \prod_{k=0}^{n-1} \gamma_{k+1}^{\dagger^{p_k} \circ \dagger^{q_k}} = \prod_{k=0}^{n-1} \gamma_{k+1}^{\dagger^{p_k \oplus q_k}} = \varepsilon_{p \oplus q}. $$
\end{proof}
\begin{corollary}
    \label{corollary : diagonal basis invariant}
    The basis $\mathcal{E}$ is invariant under conjugate operations and for all $r=0,\ldots,2^{n}-1$, $\ddagger_r:\mathcal{E} \rightarrow \mathcal{E}$ is a bijection.
\end{corollary}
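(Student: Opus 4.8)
The plan is to derive both assertions directly from Proposition~\ref{proposition : diagonal basis and conjuagate XOR}, which reduces the action of any conjugate on a diagonal basis element to a single XOR on indices, namely $\varepsilon_p^{\ddagger_r} = \varepsilon_{p \oplus r}$. Everything else is a formal consequence of this identity together with the elementary arithmetic of $\oplus$, so no genuine computation remains to be done.

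First I would dispatch invariance. Fixing $r \in \{0,\ldots,2^n-1\}$ and taking an arbitrary $\varepsilon_p \in \mathcal{E}$, the proposition gives $\ddagger_r(\varepsilon_p) = \varepsilon_p^{\ddagger_r} = \varepsilon_{p\oplus r}$. Since $0 \le p \oplus r < 2^n$ whenever $0 \le p,r < 2^n$, the index $p\oplus r$ again labels a genuine element of $\mathcal{E}$, whence $\ddagger_r(\mathcal{E}) \subseteq \mathcal{E}$. This is precisely the invariance claim.

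Next I would treat bijectivity. On the level of indices, the restriction $\ddagger_r : \mathcal{E} \to \mathcal{E}$ is entirely described by the map $\phi_r : p \mapsto p \oplus r$ on the finite set $\{0,\ldots,2^n-1\}$. Because XOR with a fixed argument is an involution, $(p\oplus r)\oplus r = p \oplus (r\oplus r) = p \oplus 0 = p$, so $\phi_r$ is its own inverse and hence a bijection of the index set. Applying the proposition twice realizes this at the level of basis elements: $(\varepsilon_p^{\ddagger_r})^{\ddagger_r} = \varepsilon_{(p\oplus r)\oplus r} = \varepsilon_p$, so $\ddagger_r$ restricted to $\mathcal{E}$ is its own inverse and is therefore a bijection. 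The one point requiring a word of care is that the diagonal basis elements $\{\varepsilon_p\}_{p=0}^{2^n-1}$ are pairwise distinct, being a basis; this lets the bijectivity of $\phi_r$ on indices lift to a bijectivity of $\ddagger_r$ on $\mathcal{E}$.

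I do not expect any substantive obstacle: the whole content is carried by Proposition~\ref{proposition : diagonal basis and conjuagate XOR}, and the remaining argument is just the observation that $p \mapsto p\oplus r$ is a self-inverse permutation of $\{0,\ldots,2^n-1\}$. If one preferred, finiteness of $\mathcal{E}$ would allow injectivity alone to force surjectivity, but the involution identity supplies both simultaneously and is the cleaner route.
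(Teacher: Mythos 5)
Your proof is correct and follows essentially the same route as the paper: both rest entirely on Proposition~\ref{proposition : diagonal basis and conjuagate XOR} and the identity $p \oplus r \oplus r = p$. The only cosmetic difference is that you package bijectivity as the single involution statement $(\varepsilon_p^{\ddagger_r})^{\ddagger_r} = \varepsilon_p$, while the paper verifies surjectivity and injectivity separately (its injectivity step implicitly using, as you note explicitly, that the basis elements are pairwise distinct).
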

\begin{proof}
    From proposition \ref{proposition : diagonal basis and conjuagate XOR}, $\mathcal{E}$ is clearly invariant under any conjugate operation. Now let $\ddagger_r \in \ddagger$, and take $\varepsilon_q \in \mathcal{E}$, then
    $$ \varepsilon_{q \oplus r}^{\ddagger_r} = \varepsilon_{q \oplus r \oplus r} = \varepsilon_{q} $$
    so $\ddagger_r$ is surjective. To prove injectivity, let $\varepsilon_p, \varepsilon_q \in \mathcal{E}$ such that $\varepsilon_p^{\ddagger_r} = \varepsilon_q^{\ddagger_r}$, then
    $$ \varepsilon_{p \oplus r} = \varepsilon_{q \oplus r} \Rightarrow p \oplus r = q \oplus r \Rightarrow p = q. $$
\end{proof}
\begin{definition}
    \label{definition : exponentiation in multi-algebra}
    We define the exponentiation for $x \in \mathbb{D}^n$ and $a \in \mathbb{R}$ as
    $$ x^a := \sum_{k=0}^{2^n-1} x_{\hat k}^a \varepsilon_k, \quad x_{\hat{k}} \in \mathbb{R}, $$
    whenever this expression makes sense i.e. $x_{\hat{k}} \neq 0$ for all $k$ if $a < 0$.
\end{definition}
\begin{proposition}
    \label{proposition : basic properties conjugates}
    For all $x \in \mathbb{D}_n$, $\ddagger_p, \ddagger_q \in \ddagger$ and $a \in \mathbb{R}$, we have
    \begin{enumerate}
        \item $x^{\ddagger_p \circ \ddagger_q} = x^{\ddagger_q \circ \ddagger_p}$,
        \item $(x^{a})^{\ddagger_p} = (x^{\ddagger_p})^a$,
        \item if $x$ is invertible, then $x^{\ddagger_p}$ is also invertible.
    \end{enumerate}
\end{proposition}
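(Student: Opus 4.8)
The plan is to reduce every statement to the idempotent (diagonal) basis $\mathcal{E}=\{\varepsilon_k\}_{k=0}^{2^n-1}$, in which exponentiation acts componentwise by Definition~\ref{definition : exponentiation in multi-algebra} and each conjugate acts as a relabelling by Proposition~\ref{proposition : diagonal basis and conjuagate XOR}. The two facts I will lean on are that every $\ddagger_p$ is an $\mathbb{R}$-algebra automorphism of $\mathbb{D}_n$ (hence $\mathbb{R}$-linear, multiplicative, and fixing $1$, since each $\dagger_k$ sends $u_k\mapsto -u_k$ while preserving $u_k^2$), and that $\varepsilon_k^{\ddagger_p}=\varepsilon_{k\oplus p}$. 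Writing a general element as $x=\sum_{k=0}^{2^n-1} x_{\hat k}\,\varepsilon_k$ with $x_{\hat k}\in\mathbb{R}$ then turns each claim into a short bookkeeping computation with the XOR index.

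For part~1 I would note that composition of conjugates corresponds to the XOR of their binary labels, exactly as in the proof of Proposition~\ref{proposition : diagonal basis and conjuagate XOR} where $\dagger^{p_k}\circ\dagger^{q_k}=\dagger^{p_k\oplus q_k}$. Hence $\ddagger_p\circ\ddagger_q=\ddagger_{p\oplus q}=\ddagger_{q\oplus p}=\ddagger_q\circ\ddagger_p$ because $\oplus$ is commutative, and applying both sides to $x$ gives the claim; equivalently this is just the already-cited fact that $(\ddagger,\circ)$ is a commutative group.

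Part~2 is the heart of the proposition and the step I expect to require care. Using linearity of $\ddagger_p$, Definition~\ref{definition : exponentiation in multi-algebra}, and the relabelling $\varepsilon_k^{\ddagger_p}=\varepsilon_{k\oplus p}$, I obtain
$$ (x^a)^{\ddagger_p}=\Big(\sum_k x_{\hat k}^{\,a}\,\varepsilon_k\Big)^{\ddagger_p}=\sum_k x_{\hat k}^{\,a}\,\varepsilon_{k\oplus p}. $$
On the other side, $x^{\ddagger_p}=\sum_k x_{\hat k}\,\varepsilon_{k\oplus p}$, so setting $j:=k\oplus p$ (a bijection of indices by Corollary~\ref{corollary : diagonal basis invariant}, with inverse $k=j\oplus p$) the $j$-th component of $x^{\ddagger_p}$ is $x_{\widehat{j\oplus p}}$, whence
$$ (x^{\ddagger_p})^a=\sum_j \big(x_{\widehat{j\oplus p}}\big)^{a}\,\varepsilon_j=\sum_k x_{\hat k}^{\,a}\,\varepsilon_{k\oplus p}. $$
The two expressions coincide. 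The subtle point I must verify is that the domain conditions of Definition~\ref{definition : exponentiation in multi-algebra} are respected: for $a<0$ one needs every component to be nonzero, but since $k\mapsto k\oplus p$ merely permutes the components, $x^a$ is defined if and only if $(x^{\ddagger_p})^a$ is, so no case is lost.

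Finally, part~3 follows at once. Since $x=\sum_k x_{\hat k}\varepsilon_k$ is invertible precisely when every $x_{\hat k}\neq 0$ (with $x^{-1}=\sum_k x_{\hat k}^{-1}\varepsilon_k$), and conjugation only permutes these components, invertibility is preserved. Alternatively I can invoke part~2 with $a=-1$: because $\ddagger_p$ is multiplicative and fixes $1$, applying it to $x\,x^{-1}=1$ yields $x^{\ddagger_p}\,(x^{-1})^{\ddagger_p}=1$, so $(x^{-1})^{\ddagger_p}=(x^{\ddagger_p})^{-1}$ is the desired inverse.
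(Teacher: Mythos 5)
Your proposal is correct and follows essentially the same route as the paper's own proof: expand $x$ in the diagonal basis, use the XOR relabelling $\varepsilon_k^{\ddagger_p}=\varepsilon_{k\oplus p}$ from Proposition~\ref{proposition : diagonal basis and conjuagate XOR}, and invoke the bijectivity of $\ddagger_p$ on $\mathcal{E}$ (Corollary~\ref{corollary : diagonal basis invariant}) to push the componentwise exponentiation of Definition~\ref{definition : exponentiation in multi-algebra} through the conjugation for parts 2 and 3. Your two small additions --- the explicit check that the domain condition for $a<0$ is preserved under the index permutation, and making explicit that $\ddagger_p$ is an algebra automorphism (which the paper only records in the remark following its proof) --- are refinements of, not departures from, the paper's argument.
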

\begin{proof}
    First,
    $$ \begin{aligned}
        x^{\ddagger_p \circ \ddagger_q} &= (x^{\ddagger_p})^{\ddagger_q} = \left( \sum_{k=0}^{2^n-1} x_{\hat k} \varepsilon_k^{\ddagger_p} \right)^{\ddagger_q}\\
        &= \left( \sum_{k=0}^{2^n-1} x_{\hat k} \varepsilon_{k \oplus p} \right)^{\ddagger_q} = \sum_{k=0}^{2^n-1} x_{\hat k} \varepsilon_{k \oplus p \oplus q}\\
        &= \sum_{k=0}^{2^n-1} x_{\hat k} \varepsilon_{k \oplus q \oplus p} = \left( \sum_{k=0}^{2^n-1} x_{\hat k} \varepsilon_k^{\ddagger_q} \right)^{\ddagger_p}\\
        &= (x^{\ddagger_q})^{\ddagger_p} = x^{\ddagger_q \circ \ddagger_p}.
    \end{aligned} $$
    For the second property,
    $$ (x^{\ddagger_p})^a = \left(\sum_{k=0}^{2^n-1} x_{\hat k} \varepsilon_{k}^{\ddagger_p} \right)^{a} $$
    and since from corollary \ref{corollary : diagonal basis invariant}, $\ddagger_p : \mathcal{E} \rightarrow \mathcal{E}$ is a bijection, definition \ref{definition : exponentiation in multi-algebra} still applies in the same way:
    $$ (x^{\ddagger_p})^a = \sum_{k=0}^{2^n-1} x_{\hat k}^a \varepsilon_{k}^{\ddagger_p} = (x^a)^{\ddagger_p}. $$
    For the third property, we note that $x$ is invertible if and only if its coefficients in the diagonal basis are nonzero \cite{courchesneAndTremblay2025}. Let $x_{\hat k} \neq 0$ for all $k$, then
    $$ x^{\ddagger_p} = \sum_{k=0}^{2^n-1} x_{\hat k} \varepsilon_{k \oplus p}. $$
    Once again, since $\ddagger_p$ is a bijection from $\mathcal{E}$ onto itself, the set $\{ \varepsilon_{k \oplus p} \}_k$ is just another way to enumerate the basis elements, and thus the resulting element $x^{\ddagger_p}$ is also invertible.
\end{proof}
\begin{remark}
    This last proof relies on the fact that $\ddagger_p$ is distributive over the addition and multiplication operations in $\mathbb{D}_n$, which is due directly from the fact that conjugation operations act only on principal units $\mathrm{j}_1, \ldots, \mathrm{j}_n$, no matter where they are.
\end{remark}
\begin{theorem}
    The structure $(\ddagger, +, \circ)_x$, as defined in definition \ref{definition : conjugates hypercomplex algebra} on the \textbf{invertible elements} $x \in \mathbb{D}_n$, satisfies all the properties of an algebra i.e. $(\ddagger, +)$ is a vector space and $\circ$ is bilinear.
\end{theorem}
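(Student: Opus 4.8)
The plan is to verify each algebra axiom by pulling it back, through the defining action on the fixed invertible element $x$, to an identity in $\mathbb{D}_n$. Every element of $(\ddagger,+,\circ)_x$ is a formal $\mathbb{R}$-linear combination $E=\sum_p a_p\ddagger_p$, and I would read it off $x$ via the evaluation $\Phi_x(E):=x^E=\prod_p (x^{a_p})^{\ddagger_p}\in\mathbb{D}_n$, which is well defined independently of the order of the factors because $\mathbb{D}_n$ is commutative. Under $\Phi_x$ the operation $+$ of Definition~\ref{definition : conjugates hypercomplex algebra} becomes multiplication in $\mathbb{D}_n$ and scalar multiplication by $a$ becomes the $a$-th power, so the whole statement reduces to exponent and product laws in the diagonal (idempotent) basis $\mathcal{E}$. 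The invertibility hypothesis enters precisely here: it guarantees $x_{\hat k}\neq 0$ for all $k$, hence $x^a$ is defined for every real $a$ (including negative values), which is what makes additive inverses available.

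For the vector space axioms I would first record the two master identities $x^{E+F}=x^E\,x^F$ and $x^{cE}=(x^E)^c$. The first is the definition of $+$ together with commutativity of $\mathbb{D}_n$; the second follows from item 2 of Proposition~\ref{proposition : basic properties conjugates}, namely $(x^a)^{\ddagger_p}=(x^{\ddagger_p})^a$, propagated across the product over $p$. Commutativity and associativity of $+$ then reduce to commutativity and associativity of multiplication in $\mathbb{D}_n$. The neutral element is $0\cdot\ddagger_0$, since $x^{0}=1$ is the multiplicative identity; the additive inverse of $E$ is $(-1)E$, whose action is $(x^E)^{-1}$, which exists because $x^E$ is invertible (being a product of powers of conjugates of the invertible $x$, cf. item 3 of Proposition~\ref{proposition : basic properties conjugates}). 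The scalar axioms $1\cdot E=E$ and $(ab)E=a(bE)$ come from $(x^b)^a=x^{ab}$; distributivity $(a+b)E=aE+bE$ comes from $x^{a+b}=x^a x^b$ together with the multiplicativity of each conjugate; and $a(E+F)=aE+aF$ comes from the product-power law $(yz)^a=y^a z^a$, which holds in $\mathbb{D}_n$ because, in the diagonal basis, both powers and products are taken componentwise.

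For bilinearity of $\circ$ I would use its action $x^{E\circ F}=(x^E)^F$. Linearity in the left argument, $(E_1+E_2)\circ F=(E_1\circ F)+(E_2\circ F)$ and $(cE)\circ F=c(E\circ F)$, amounts to the identities $(yz)^F=y^F z^F$ and $(y^c)^F=(y^F)^c$ with $y=x^{E_i}$; I would establish these first for a single conjugate $F=\ddagger_q$ (there they are exactly the multiplicativity of $\ddagger_q$ and item 2 of Proposition~\ref{proposition : basic properties conjugates}) and then extend to a general combination $F=\sum_q b_q\ddagger_q$ by writing $(\cdot)^F=\prod_q((\cdot)^{b_q})^{\ddagger_q}$ and interchanging the product with powers and conjugations, which commutativity of $\mathbb{D}_n$ permits. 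Linearity in the right argument then follows at once from commutativity of composition, $x^{\ddagger_p\circ\ddagger_q}=x^{\ddagger_q\circ\ddagger_p}$ (item 1 of Proposition~\ref{proposition : basic properties conjugates}), which transfers to $E\circ F=F\circ E$ on all combinations. Together these give $\mathbb{R}$-bilinearity of $\circ$, completing the proof that $(\ddagger,+,\circ)_x$ is an algebra.

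I expect the main obstacle to be exactly that extension step: promoting the homomorphism and exponent-commutation properties from a single conjugate $\ddagger_q$ to an arbitrary combination $F=\sum_q b_q\ddagger_q$, since $F$ acts through the iterated product $\prod_q((\cdot)^{b_q})^{\ddagger_q}$ and one must justify swapping this product past powers and conjugations. This is legitimate only because $\mathbb{D}_n$ is commutative with componentwise diagonal powers, so I would make that structural fact explicit. If helpful, I would also note the conceptual shortcut that taking logarithms of the diagonal coefficients linearizes the entire construction, turning $+$ into vector addition and $\circ$ into an XOR-indexed bilinear convolution, which renders both the vector-space and the bilinearity claims transparent.
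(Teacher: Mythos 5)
Your proposal is correct and follows essentially the same route as the paper: each vector-space axiom and each bilinearity condition is verified by evaluating on the invertible element $x$ and reducing to exponent and product laws in $\mathbb{D}_n$, invoking exactly the same ingredients (commutativity of $\mathbb{D}_n$, items 1--3 of Proposition~\ref{proposition : basic properties conjugates}, and invertibility to secure $x^a$ for $a<0$ and hence additive inverses). The only difference is presentational: you carry out explicitly, via the evaluation map $\Phi_x$ and the product formula $x^E=\prod_p(x^{a_p})^{\ddagger_p}$, the extension from the basis elements $\ddagger_p$ to arbitrary linear combinations, a step the paper verifies only on basis elements and then delegates to a closing remark.
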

\begin{proof}
    First, $(\ddagger,+)$ must satisfy the vector space axioms. Let $x \in \mathbb{D}_n$ be an invertible element, $\ddagger_p, \ddagger_q, \ddagger_r \in \ddagger$ and $a,b \in \mathbb{R}$. 

    Associativity of the addition:
    $$ x^{\ddagger_p + (\ddagger_q + \ddagger_r)} = x^{\ddagger_p} x^{\ddagger_q + \ddagger_r} = x^{\ddagger_p} (x^{\ddagger_q} x^{\ddagger_r}) = (x^{\ddagger_p} x^{\ddagger_q}) x^{\ddagger_r} = x^{(\ddagger_p + \ddagger_q) + \ddagger_r}. $$

    Commutativity of the addition:
    $$ x^{\ddagger_p + \ddagger_q} = x^{\ddagger_p} x^{\ddagger_q} = x^{\ddagger_q} x^{\ddagger_p} = x^{\ddagger_q + \ddagger_p}. $$

    Identity element for the addition. Here $a \equiv a\dagger_0$ in $(\ddagger, +, \circ)_x$:
    $$ x^{\ddagger_p + 0} = x^{\ddagger_p} x^0 = x^{\ddagger_p}. $$

    Inverse element for the addition (using property 3 of proposition \ref{proposition : basic properties conjugates}):
    $$ x^{\ddagger_p - \ddagger_p} = x^{\ddagger_p} x^{-\ddagger_p} = \frac{x^{\ddagger_p}}{x^{\ddagger_p}} = 1 = x^0. $$

    Compatibility of scalar multiplication with field multiplication:
    $$ x^{a(b\ddagger_p)} = (x^a)^{b\ddagger_p} = ((x^{a})^b)^{\ddagger_p} = (x^{(ab)})^{\ddagger_p} = x^{(ab)\ddagger_p}. $$

    Identity element for the scalar multiplication:
    $$ x^{1\ddagger_p} = (x^1)^{\ddagger_p} = x^{\ddagger_p}. $$

    Distributivity of scalar multiplication over the vector addition:
    $$ x^{a(\ddagger_p + \ddagger_q)} = (x^a)^{\ddagger_p + \ddagger_q} = (x^{a})^{\ddagger_p}(x^a)^{\ddagger_q} = x^{a\ddagger_p + a\ddagger_q}. $$

    Distributivity of scalar multiplication over the field addition:
    $$ x^{(a+b)\ddagger_p} = (x^{a+b})^{\ddagger_p} = (x^ax^b)^{\ddagger_p} = x^{a\ddagger_p} x^{b\ddagger_p} = x^{a\ddagger_p + b\ddagger_p}. $$
    Then, we prove that $\circ$ is bilinear.

    Right distributivity of $\circ$ (left distributivity can be shown in a similar way):
    $$ x^{(a \ddagger_p + b\ddagger_q)\circ\ddagger_r} = (x^{a\ddagger_p} x^{b\ddagger_q})^{\ddagger_r} = x^{a\ddagger_p\circ\ddagger_r}x^{b\ddagger_q\circ\ddagger_r} = x^{a\ddagger_p\circ\ddagger_r + b\ddagger_q\circ\ddagger_r}. $$

    Compatibility of $\circ$ with scalars (using property 2 of proposition \ref{proposition : basic properties conjugates}):
    $$ x^{(a\ddagger_p)\circ(b\ddagger_q)} = (x^{a\ddagger_p})^{b\ddagger_q} = (((x^a)^{\ddagger_p})^{b})^{\ddagger_q} = (((x^a)^{b})^{\ddagger_p})^{\ddagger_q} = x^{(ab)\ddagger_p \circ \ddagger_q}. $$
    This completes the proof.
\end{proof}
\begin{remark}
    The proof has been done only on basis elements $\ddagger_0,\ddagger_1,\ldots,\ddagger_{2^n-1}$ of $(\ddagger, +, \circ)$, an arbitrary element would be of the form:
    $$ \eta = \sum_{p=0}^{2^n-1} \eta_p \ddagger_p, \quad \eta_p \in \mathbb{R}. $$
    Since all the properties of an algebra are satisfied for the basis elements, it extends directly to arbitrary elements of $\eta \in (\ddagger, +, \circ)_x$.
\end{remark}
\begin{corollary}
    Let $x \in \mathbb{D}_n$ be an invertible element. Then $(\ddagger, +, \circ)_x$ has the structure of a multiperplex algebra, has a diagonal basis $\{ \tilde \ddagger_p \}_{p=0}^{2^n-1}$, and $x$ has the following representation:
    $$ x = x^{\sum_p \tilde \ddagger_p} = \prod_p x^{\tilde \ddagger_p}. $$
\end{corollary}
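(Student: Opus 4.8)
The plan is to recognize $(\ddagger, +, \circ)_x$ as a concrete instance of the multiperplex algebra $\mathbb{D}_n$, and then transport the diagonal-basis construction already developed for multiperplex spaces.

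First I would pin down the algebraic data. The preceding theorem shows that $(\ddagger, +, \circ)_x$ is an algebra over $\mathbb{R}$; it remains to check it against Definition~\ref{XOR algebra : hypercomplex algebra def} and Definition~\ref{Diagonal basis in hypercomplex algebras : definition pre-multicomplex/multiperplex}. I would take $\ddagger_0 = \mathrm{id}$ as the unit and the single conjugates $\dagger_1,\ldots,\dagger_n$ (occupying the binary positions $\ddagger_{2^{k-1}}$) as the principal units. The four hypercomplex axioms are then immediate: $\ddagger_0$ is a two-sided identity; $\circ$ is associative, being function composition; the group $\ddagger$ is commutative, so $\lambda = +1$; and each conjugate is an involution, $\dagger_k\circ\dagger_k=\mathrm{id}$, whence every principal unit squares to $+1$. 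This is exactly the commutative case with all $u_i^2=+1$ over $\mathbb{R}$, that is, $\mathbb{M}_{n,0}(\mathbb{R})=\mathbb{D}_n$. Consistently, Theorem~\ref{XOR algebra : XOR thm} and Proposition~\ref{XOR algebra : values of s} give index function $p\oplus q$ and $s\equiv 1$, which matches $\mathbb{D}_n$.

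Second, since the structure is now a commutative hypercomplex algebra with nonvanishing principal units, I would invoke Theorem~\ref{theorem : algorithm-like diagonal basis} directly: it furnishes a multiplicative diagonal basis $\{\tilde\ddagger_p\}_{p=0}^{2^n-1}$, here with $\nu_q=1$ (the Sylvester--Hadamard case) since $s(q,q)=1$. Concretely these are the idempotents built from $\tfrac12(\ddagger_0+\dagger_i)$, in the same way the $\varepsilon_p$ are built from $\gamma_i=\tfrac12(1+\mathrm{j}_i)$ in $\mathbb{D}_n$. For the representation of $x$, I would then use the identity-sum property of multiperplex diagonal bases (established for $\mathbb{D}_n$ earlier in this section), namely $\sum_p\tilde\ddagger_p=\ddagger_0$. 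Since $x^{\ddagger_0}=x$, applying the addition rule of Definition~\ref{definition : conjugates hypercomplex algebra}, which converts $+$ into a product of conjugated powers, yields $x=x^{\ddagger_0}=x^{\sum_p\tilde\ddagger_p}=\prod_p x^{\tilde\ddagger_p}$.

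The main obstacle I anticipate is well-definedness rather than the algebraic bookkeeping. The diagonal elements $\tilde\ddagger_p$ carry coefficients $\pm 1/2^n$, so each $x^{\tilde\ddagger_p}$ involves fractional (and, once the $\ddagger_q$ are expanded, signed) real powers of $x$, precisely the $z^{1/2},\overline{z}^{1/2}$ behaviour seen in the motivating complex-number computation. This is exactly why invertibility of $x$ is indispensable: by Proposition~\ref{proposition : basic properties conjugates} invertibility is preserved under every $\ddagger_p$, and by Definition~\ref{definition : exponentiation in multi-algebra} it guarantees that $x^a$ is defined for all $a\in\mathbb{R}$, including $a<0$. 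The remaining care is to confirm that these fractional powers combine consistently under $+$, which is underwritten by commutativity and the power law $x^{a+b}=x^ax^b$ assumed at the outset of the section.
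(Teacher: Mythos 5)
Your proposal is correct and follows essentially the same route the paper intends: the corollary is stated there without an explicit proof, precisely as the immediate assembly you carry out — the preceding theorem gives the algebra structure, the commutativity of $\ddagger$ and the involutivity $\dagger_k \circ \dagger_k = \mathrm{id}$ identify it as $\mathbb{M}_{n,0}(\mathbb{R}) = \mathbb{D}_n$, Theorem~\ref{theorem : algorithm-like diagonal basis} with $\nu_q = 1$ (the Sylvester--Hadamard case) supplies the diagonal basis $\{\tilde\ddagger_p\}$, and the sum-to-identity property $\sum_p \tilde\ddagger_p = \ddagger_0$ together with Definition~\ref{definition : conjugates hypercomplex algebra} converts $x = x^{\ddagger_0}$ into $\prod_p x^{\tilde\ddagger_p}$, exactly as in the paper's motivating $\mathbb{C}$ computation. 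Your closing point about fractional and negative powers correctly pinpoints why invertibility of $x$ is required, in line with Definition~\ref{definition : exponentiation in multi-algebra} and Proposition~\ref{proposition : basic properties conjugates}.
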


\bibliographystyle{plain}
\bibliography{refs}

\vspace*{1cm}

\noindent $^1$D\'EPARTEMENT DE MATHÉMATIQUES ET D'INFORMATIQUE, \\ UNIVERSITÉ DU QUÉBEC, TROIS-RIVI\`ERES, QC, CANADA \\
{\em Email address}: \texttt{derek.courchesne@uqtr.ca} \\
\\
\noindent $^2$D\'EPARTEMENT DE MATHÉMATIQUES ET D'INFORMATIQUE, \\ UNIVERSITÉ DU QUÉBEC, TROIS-RIVI\`ERES, QC, CANADA \\
{\em Email address}: \texttt{sebastien.tremblay@uqtr.ca}\\

\end{document}